\newcommand\myatop[2]{\genfrac{}{}{0pt}{}{#1}{#2}}
\newcommand{\m}{{\mathrm{m}}}
\newcommand{\wt}{{\mathrm{wt}}} 
\newcommand{\lcm}{{\rm lcm}}
\newcommand{\ord}{{\mathrm{ord}}}
\newcommand{\cPGRM}{{\mathtt{PGRM}}} 
\newcommand{\cGRM}{{{\mho}}} 
\newcommand{\gf}{{\mathrm{GF}}}
\newcommand{\C}{{\mathcal{C}}}
\newcommand{\BCH}{{\mathtt{BCH}}}
\newcommand{\cR}{{\mathcal{R}}}
\newcommand{\cJ}{{\mathtt{J}}}
\newcommand{\cN}{{\mathcal{N}}}
\newcommand{\AGL}{{\mathrm{AGL}}}
\newcommand{\Aut}{{\mathrm{Aut}}} 
\newcommand{\PAut}{{\mathrm{PAut}}} 
\newcommand{\MAut}{{\mathrm{MAut}}} 
\newcommand{\GAut}{{\mathrm{Aut}}}
\newcommand{\Sym}{{\mathrm{Sym}}} 
\newcommand{\cP}{{\mathcal{P}}}
\newcommand{\cB}{{\mathcal{B}}} 
\newcommand{\cD}{{\mathcal{D}}} 
\newcommand{\bD}{{\mathcal{D}}}
\newcommand{\bone}{{\mathbf{1}}}
\newtheorem{theorem}{Theorem}
\newtheorem{lemma}[theorem]{Lemma}
\newtheorem{corollary}[theorem]{Corollary}
\newtheorem{problem}{Open Problem}
\newtheorem{example}{Example}
\begin{document}

\title{Another Generalisation of the Reed-Muller Codes\thanks{C. Ding's research was supported by the Hong Kong Research Grants Council, under Grant No. 16301114.}}

\author{Cunsheng~Ding, ~Chunlei Li, and Yongbo Xia   
\thanks{C. Ding is with the Department of Computer Science and Engineering, 
The Hong Kong University of Science and Technology, Clear Water Bay, Kowloon, Hong Kong, China (Email: cding@ust.hk).} 
\thanks{C. Li is with the Department of Electrical Engineering and Computer Science, University of Stavanger, Stavanger, 4036, Norway (Email: chunlei.li@uis.no).}
\thanks{Y. Xia is with the Department of Mathematics and Statistics, South-Central University for Nationalities, Wuhan 430074, China (Email: xia@mail.scuec.edu.cn).}
}

\date{\today}
\maketitle

\begin{abstract} 
The punctured binary Reed-Muller code is cyclic and was generalised into the punctured generalised 
Reed-Muller code over $\gf(q)$ in the literature. The major objective of this paper is to present another generalisation of the punctured binary Reed-Muller code and the binary Reed-Muller code, and analyse these codes. 
We will prove that this newly generalised Reed-Muller code is affine-invariant and holds $2$-designs. 
Another objective is to construct 
a family of reversible cyclic codes that are related to the newly generalised Reed-Muller codes.  
\end{abstract}

\begin{keywords}
BCH codes, cyclic codes, linear codes, punctured Reed-Muller codes, punctured generalised Reed-Muller 
codes, Reed-Muller codes. 
\end{keywords}

\section{Introduction}\label{sec-intro} 

Throughout this paper, let $p$ be a prime and let $q=p^s$ be a power of $p$, where $s$ is a positive integer. 
An $[n,k,d]$ code $\C$ over $\gf(q)$ is a $k$-dimensional subspace of $\gf(q)^n$ with minimum 
(Hamming) distance $d$. 

An $[n,k]$ code $\C$ over $\gf(q)$ is called {\em cyclic} if 
$(c_0,c_1, \cdots, c_{n-1}) \in \C$ implies $(c_{n-1}, c_0, c_1, \cdots, c_{n-2}) 
\in \C$.  
By identifying any vector $(c_0,c_1, \cdots, c_{n-1}) \in \gf(q)^n$ 
with  
$$ 
c_0+c_1x+c_2x^2+ \cdots + c_{n-1}x^{n-1} \in \gf(q)[x]/(x^n-1), 
$$
any code $\C$ of length $n$ over $\gf(q)$ corresponds to a subset of the quotient ring 
$\gf(q)[x]/(x^n-1)$. 
A linear code $\C$ is cyclic if and only if the corresponding subset in $\gf(q)[x]/(x^n-1)$ 
is an ideal of the ring $\gf(q)[x]/(x^n-1)$. 

Note that every ideal of $\gf(q)[x]/(x^n-1)$ is principal. Let $\C=\langle g(x) \rangle$ be a 
cyclic code, where $g(x)$ is monic and has the smallest degree among all the 
generators of $\C$. Then $g(x)$ is unique and called the {\em generator polynomial,} 
and $h(x)=(x^n-1)/g(x)$ is referred to as the {\em check polynomial} of $\C$. 

The original Reed-Muller codes were discovered by Reed and Muller independently in 1964 
\cite{Muller,Reed}. These codes are standard materials in textbooks and research monographs 
on coding theory, and were employed in space communication in the Mariner 9 Spacecraft. 
These facts show the importance of the Reed-Muller codes. 

Recently, Reed-Muller codes have become a hot topic in coding theory due to the fact that 
they belong to the classes of locally testable codes and locally decodable codes, which  
makes them useful in the design of probabilistically checkable proofs in computational 
complexity theory \cite{Sergey}. 

The original Reed-Muller codes are binary and linear, but not cyclic. However, the punctured 
Reed-Muller codes are cyclic. The punctured binary Reed-Muller code was generalised into the 
punctured generalised 
Reed-Muller code over $\gf(q)$ in the literature \cite{AssmusKey92,AssmusKey98,DK00,DGM,GV98}. In this paper, we present another 
generalisation of the punctured binary Reed-Muller code and the binary Reed-Muller code, and study properties of these  
codes. We will prove that the newly generalised Reed-Muller code is affine-invariant and holds $2$-designs. 
We will also construct a family of reversible cyclic codes from the newly generalised Reed-Muller 
codes and analyse their parameters.

\section{$q$-cyclotomic cosets modulo $n$ and auxililaries}\label{sec-qcyclotomiccosets} 

To deal with cyclic codes of length $n$ over $\gf(q)$, we have to study the canonical factorization of $x^n-1$ 
over $\gf(q)$. To this end, we need to introduce $q$-cyclotomic cosets modulo $n$. Note that $x^n-1$ has no 
repeated factors over $\gf(q)$ if and only if $\gcd(n, q)=1$. Throughout this paper, we assume 
that $\gcd(n, q)=1$.  

Let $\cN = \{0,1,2, \cdots, n-1\}$, denoting the ring of integers modulo $n$. For any $s \in \cN$, the \emph{$q$-cyclotomic coset of $s$ modulo $n$\index{$q$-cyclotomic coset modulo $n$}} is defined by 
$$ 
C_s=\{sq^i \bmod{n}: 0 \leq i \leq \ell_s-1\} \subseteq \cN,  
$$
where $\ell_s$ is the smallest positive integer such that $s \equiv s q^{\ell_s} \pmod{n}$, and is the size of the 
$q$-cyclotomic coset. The smallest integer in $C_s$ is called the \emph{coset leader\index{coset leader}} of $C_s$. 
Let $\Gamma_{(n,q)}$ be the set of all the coset leaders. We have then $C_s \cap C_t = \emptyset$ for any two 
distinct elements $s$ and $t$ in  $\Gamma_{(n,q)}$, and  
\begin{eqnarray}\label{eqn-cosetPP}
\bigcup_{s \in  \Gamma_{(n,q)} } C_s = \cN. 
\end{eqnarray}
Hence, the distinct $q$-cyclotomic cosets modulo $n$ partition $\cN$. 

Let $m=\ord_{n}(q)$, and let $\alpha$ be a generator of $\gf(q^m)^*$. Put $\beta=\alpha^{(q^m-1)/n}$. 
Then $\beta$ is a primitive $n$-th root of unity in $\gf(q^m)$. The minimal 
polynomial $\m_{s}(x)$ of $\beta^s$ over $\gf(q)$ is a monic polynomial of the smallest degree over 
$\gf(q)$ with 
$\beta^s$ as a zero. It is now straightforward to see that this polynomial is given by 
\begin{eqnarray}
\m_{s}(x)=\prod_{i \in C_s} (x-\beta^i) \in \gf(q)[x], 
\end{eqnarray} 
which is irreducible over $\gf(q)$. It then follows from (\ref{eqn-cosetPP}) that 
\begin{eqnarray}\label{eqn-canonicalfact}
x^n-1=\prod_{s \in  \Gamma_{(n,q)}} \m_{s}(x)
\end{eqnarray}
which is the factorization of $x^n-1$ into irreducible factors over $\gf(q)$. This canonical factorization of $x^n-1$ 
over $\gf(q)$ is fundamental for the study of cyclic codes.

%The following result will be useful and is not hard to prove \cite{HP03}[Theorem 4.1.4] 

%\begin{lemma}
%The size $\ell_s$ of each $q$-cyclotomic coset $\C_s$ is a divisor of $\ord_{n}(q)$, which is the size %$\ell_1$ of $C_1$. 
%\end{lemma}

%The following lemma and theorem were proved in \cite{AKS} and contain results in \cite{YH96} as special %cases. 

%\begin{lemma}\label{lem-AKS}
%Let $n$ be a positive integer such that $\gcd(n, q)=1$ and $q^{\lfloor m/2 \rfloor}<n \leq q^m-1$, %where 
%$m=\ord_n(q)$. Then the $q$-cyclotomic coset $C_s=\{sq^j \bmod{n}: 0 \leq j \leq m-1\}$ has cardinality 
%$m$ for all $s$ in the range $1 \leq s \leq n q^{\lceil m/2 \rceil}/(q^m-1)$. In addition, every $s$ %with 
%$s \not\equiv 0 \pmod{q}$ in this range is a coset leader.  
%\end{lemma} 

\section{The punctured generalised Reed-Muller codes over $\gf(q)$}

Let $m$ be a positive integer and let $n=q^m-1$ from now on, where $q=p^s$, $p$ is a prime 
and $s$ 
is a positive integer. For any integer $a$ with $0 \leq a \leq n-1$, we have the following $q$-adic 
expansion 
$$ 
a=\sum_{j=0}^{m-1} a_jq^j, 
$$
where $0 \leq a_j \leq q-1$. The $q$-weight of $a$, denoted by $\varpi(a)$, 
is defined by 
$$
\varpi(a)=\sum_{j=0}^{m-1} a_j. 
$$

Let $\alpha$ be a generator of $\gf(q^m)^*$. Let $\ell=\ell_1(q-1)+\ell_0<q(m-1),$ 
where $0 \leq \ell_0 \leq q-1$. The $\ell$-th order 
\emph{punctured generalised Reed-Muller code}\index{punctured generalised Reed-Muller code} 
$\cPGRM_q(\ell, m)$ over $\gf(q)$ is the cyclic code of length $n=q^m-1$ with generator polynomial 
\begin{eqnarray}\label{eqn-generatorpolyPGRMcode}
g_{(q,m,\ell)}^{R}(x):= \prod_{\myatop{1 \leq a \leq n-1}{ \varpi(a) < (q-1)m-\ell}} (x - \alpha^a).  
\end{eqnarray}
Since $\varpi(a)$ is a constant function 
on each $q$-cyclotomic coset modulo $n$, $g_{(q,m,\ell)}^{R}(x)$ is a polynomial over 
$\gf(q)$. By definition, $g_{(q,m,\ell)}^{R}(x)$ is a divisor of $x^n-1$. 

The parameters of the punctured generalised Reed-Muller code are known and given in the following theorem 
\cite[Theorem 5.4.1]{AssmusKey92}. 

\begin{theorem}\label{thm-PGRMcode}
The code $\cPGRM_q(\ell, m)$ has length $n=q^m-1$, dimension 
$$ 
k=\sum_{i=0}^\ell \sum_{j=0}^{m} (-1)^j \binom{m}{j} \binom{i-jq+m-1}{i-jq}, 
$$
and minimum distance 
$$ 
d=(q-\ell_0)q^{m-\ell_1-1}-1. 
$$
\end{theorem}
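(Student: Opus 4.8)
The plan is to read off the dimension from the degree of the generator polynomial and to obtain the minimum distance by transporting the statement to the evaluation picture of generalised Reed--Muller codes, where the extremal weight is known.

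\textbf{Dimension.} Since $\cPGRM_q(\ell,m)$ is cyclic with generator polynomial $g_{(q,m,\ell)}^{R}(x)$, its dimension is $k=n-\deg g_{(q,m,\ell)}^{R}$, so $k$ equals the number of exponents $a\in\{0,1,\dots,n-1\}$ that are \emph{not} zeros of the code, namely $a=0$ together with those $a$ with $\varpi(a)\ge (q-1)m-\ell$. First I would record the digit-complementation map $a\mapsto (q^m-1)-a$ on $\{0,1,\dots,q^m-1\}$: since $q^m-1=\sum_{j}(q-1)q^j$, the $q$-adic digits of $(q^m-1)-a$ are $q-1-a_j$, whence $\varpi\bigl((q^m-1)-a\bigr)=(q-1)m-\varpi(a)$. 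This bijection carries $\{a:\varpi(a)\ge (q-1)m-\ell\}$ onto $\{b:\varpi(b)\le \ell\}$; a short bookkeeping step (the point $a=q^m-1$, excluded from the cyclic length, maps to $b=0$, while $a=0$ is never a zero) then gives $k=\#\{b\in\{0,\dots,q^m-1\}:\varpi(b)\le \ell\}$. Finally, the number of $b$ with $\varpi(b)=i$ is the number of solutions of $b_0+\cdots+b_{m-1}=i$ with $0\le b_j\le q-1$, which by inclusion--exclusion on the constraints $b_j\le q-1$ equals $\sum_{j=0}^{m}(-1)^j\binom{m}{j}\binom{i-jq+m-1}{i-jq}$; summing over $0\le i\le \ell$ yields the stated formula for $k$.

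\textbf{Minimum distance.} For $d$ I would pass to the evaluation code $\mathrm{GRM}_q(\ell,m)$ of length $q^m$, consisting of the vectors $(f(v))_{v\in\gf(q)^m}$ for all $f\in\gf(q)[x_1,\dots,x_m]$ of total degree at most $\ell$. The bridge is the Mattson--Solomon/discrete-Fourier correspondence, resting on the $\gf(q)$-linearity of each Frobenius map $x\mapsto x^{q^j}$ under $\gf(q^m)\cong\gf(q)^m$, which makes the monomial $x^a=\prod_j (x^{q^j})^{a_j}$ carry reduced $\gf(q)$-degree exactly $\varpi(a)$. The standard consequence is that deleting the coordinate at the origin $\mathbf{0}$ from $\mathrm{GRM}_q(\ell,m)$ produces the cyclic code whose nonzeros are $\{b:\varpi(b)\le\ell\}$, and this code coincides, up to the reversal permutation $\alpha^i\mapsto\alpha^{-i}$ induced by the complementation above, with $\cPGRM_q(\ell,m)$ (whose nonzeros are $\{a:\varpi(a)\ge(q-1)m-\ell\}\cup\{0\}$). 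As a coordinate permutation preserves weights, it suffices to determine the minimum distance of $\mathrm{GRM}_q(\ell,m)$ punctured at the origin.

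\textbf{Two bounds and the obstacle.} Write $\ell=\ell_1(q-1)+\ell_0$ with $0\le \ell_0\le q-1$. For the upper bound I would use the explicit polynomial $f=\prod_{i=1}^{\ell_1}\bigl(1-x_i^{q-1}\bigr)\prod_{j=1}^{\ell_0}(x_{\ell_1+1}-\gamma_j)$, where $\gamma_1,\dots,\gamma_{\ell_0}$ are distinct nonzero elements of $\gf(q)$: it has degree $\ell$ and is nonzero exactly when $x_1=\cdots=x_{\ell_1}=0$, $x_{\ell_1+1}\notin\{\gamma_j\}$, and $x_{\ell_1+2},\dots,x_m$ are arbitrary, giving weight $(q-\ell_0)q^{m-\ell_1-1}$; moreover $f(\mathbf{0})=\prod_j(-\gamma_j)\ne 0$ (here the choice $\gamma_j\ne 0$ is used), so its puncture has weight $(q-\ell_0)q^{m-\ell_1-1}-1$. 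For the matching lower bound I would invoke the minimum distance $(q-\ell_0)q^{m-\ell_1-1}$ of $\mathrm{GRM}_q(\ell,m)$ and observe that any nonzero codeword of the punctured code lifts to a nonzero codeword of $\mathrm{GRM}_q(\ell,m)$ whose weight exceeds it by at most one, forcing $d\ge (q-\ell_0)q^{m-\ell_1-1}-1$. The main obstacle is exactly this unpunctured lower bound --- the genuine theorem that a nonzero polynomial of degree at most $\ell$ in $m$ variables has at most $q^m-(q-\ell_0)q^{m-\ell_1-1}$ zeros in $\gf(q)^m$ --- which I would prove by induction on $m$, peeling off the highest occurring power of one variable and combining the single-variable root count with the inductive estimate; the delicate point is the accounting when the degree in the peeled variable crosses a multiple of $q-1$, which is what produces the step in the exponent of $q$.
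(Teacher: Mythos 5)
The paper does not actually prove this theorem: it quotes the parameters from \cite[Theorem 5.4.1]{AssmusKey92}, so there is no in-paper argument to compare yours against, and your proposal amounts to a reconstruction of the standard literature proof. That reconstruction is essentially correct. The dimension computation is complete: the digit-complementation $a \mapsto (q^m-1)-a$ satisfies $\varpi\bigl((q^m-1)-a\bigr)=(q-1)m-\varpi(a)$, so the nonzeros $\{0\}\cup\{a:\varpi(a)\ge (q-1)m-\ell\}$ are counted by the set $\{b:\varpi(b)\le\ell\}$, and the inclusion--exclusion count of bounded compositions gives exactly the stated double sum. The distance argument is also structured correctly: the identification of $\cPGRM_q(\ell,m)$, up to the weight-preserving reversal permutation $i\mapsto -i \bmod n$, with the evaluation code $\mathrm{GRM}_q(\ell,m)$ punctured at the origin; the explicit degree-$\ell$ polynomial with support of size $(q-\ell_0)q^{m-\ell_1-1}$ that is nonzero at $\mathbf{0}$ (your insistence that the $\gamma_j$ be nonzero is precisely what makes the punctured weight drop by one, yielding the upper bound with the ``$-1$''); and the lift-by-one argument for the matching lower bound. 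The one place where your write-up is not self-contained is the one you flag yourself: the minimum distance $(q-\ell_0)q^{m-\ell_1-1}$ of the unpunctured code $\mathrm{GRM}_q(\ell,m)$, for which you give only a one-sentence induction sketch. That statement is the classical core of the whole theorem (Kasami--Lin--Peterson, Delsarte--Goethals--MacWilliams \cite{DGM}); citing it as known is perfectly in the spirit of the paper, which cites even more, but if the goal were a from-scratch proof, that induction --- with the delicate accounting as the degree in the peeled variable crosses multiples of $q-1$ --- is the step that would still need to be written out in full. Everything else in your proposal is complete and correct.
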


Let $\bone=(1,1,\cdots, 1) \in \gf(q)^n$ and 
$$ 
\gf(q)\bone =\{a \bone: a \in \gf(q)\}. 
$$ 
Then $\gf(q)\bone$ is a subspace of $\gf(q)^n$ with dimension 1. A proof of the following 
property can be found in \cite{AssmusKey98}.

\begin{theorem}\label{thm-PGRMcodedual} 
The dual codes $\cPGRM_q(\ell, m)^\perp$ and the original ones $\cPGRM_q(\ell', m)$ are related as follows:  
$$
\cPGRM_q(\ell, m)^\perp = (\gf(q)\bone)^\perp \cap \cPGRM_q(m(q-1)-\ell, m). 
$$
\end{theorem}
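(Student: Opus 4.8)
The plan is to prove the duality relation by analyzing the zero sets (defining sets) of the generator polynomials, since a cyclic code is determined by which powers of $\alpha$ are roots of its generator polynomial. First I would recall that for a cyclic code $\C = \langle g(x) \rangle$ of length $n$ over $\gf(q)$ with $g(x) = \prod_{a \in T}(x-\alpha^a)$, the dual code $\C^\perp$ is also cyclic, and its defining set is $T^\perp = \{-a \bmod n : a \in \{0,1,\dots,n-1\} \setminus T\}$, i.e. the negatives of the non-roots. Thus the heart of the matter is a bookkeeping argument comparing three index sets modulo $n = q^m-1$: the defining set of $\cPGRM_q(\ell,m)$, namely $A_\ell = \{a : 1 \le a \le n-1,\ \varpi(a) < (q-1)m - \ell\}$; the negation-and-complement operation that produces the dual's defining set; and the defining set of $\cPGRM_q(m(q-1)-\ell,m)$ intersected with the condition coming from $(\gf(q)\bone)^\perp$.

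The key computational input I would isolate is the behaviour of the $q$-weight function $\varpi$ under negation modulo $n = q^m-1$. Here I would use the standard identity that for $1 \le a \le n-1$ the $q$-adic expansion of $n-a$ has digits $(q-1)-a_j$, so that
\begin{equation}
\varpi(n-a) = (q-1)m - \varpi(a).
\end{equation}
This is the essential lever: it converts the strict inequality $\varpi(a) < (q-1)m - \ell$ defining $A_\ell$ into a condition on $\varpi(n-a)$, which is exactly the kind of condition appearing in the defining set of the code $\cPGRM_q(m(q-1)-\ell,m)$. I would then carefully track the boundary cases — in particular the index $a=0$ (corresponding to the root $\alpha^0 = 1$) — because the factor $(\gf(q)\bone)^\perp$ is precisely the all-ones-orthogonal condition that forces $\alpha^0 = 1$ to be a root, equivalently forces $0$ into the defining set. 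That is why the plain code $\cPGRM_q(m(q-1)-\ell,m)$ must be intersected with $(\gf(q)\bone)^\perp$ rather than equalling the dual outright.

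Concretely, the steps I would carry out in order are: (i) write down $\C^\perp$'s defining set $A_\ell^\perp = -(\{0,\dots,n-1\}\setminus A_\ell) \bmod n$; (ii) substitute the negation identity for $\varpi$ to rewrite membership in $A_\ell^\perp$ as a $q$-weight inequality in the original variable, showing $a \in A_\ell^\perp$ iff $\varpi(a) < (q-1)m - (m(q-1)-\ell) = \ell$ together with the inclusion of the index $0$; (iii) compare this with the defining set of $\cPGRM_q(m(q-1)-\ell,m)$ and observe that the latter differs exactly by whether $0$ is a root, which is supplied by intersecting with $(\gf(q)\bone)^\perp$; (iv) conclude equality of the two codes since two cyclic codes with the same defining set coincide. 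The main obstacle I expect is step (ii)'s boundary bookkeeping: reconciling the strict-versus-weak inequalities and the special indices $a=0$ and $a$ with maximal $q$-weight requires care, since an off-by-one error in the weight threshold or a mishandled endpoint would break the identity; the negation formula itself is routine, but keeping the all-ones coordinate correctly accounted for through the negation-and-complement operation is the delicate point that justifies the appearance of the factor $(\gf(q)\bone)^\perp$.
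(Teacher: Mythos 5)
You should first note that the paper never proves this theorem: it is quoted from Assmus--Key with a citation, so the only thing to assess is whether your argument is sound. Your machinery is the right one --- the dual of a cyclic code with defining set $T$ has defining set $\{-a \bmod n : a \notin T\}$, the digit-complement identity $\varpi(n-a)=(q-1)m-\varpi(a)$ is valid (no borrowing occurs since every $q$-adic digit of $n$ is $q-1$), and intersecting with $(\gf(q)\bone)^\perp$ adjoins the index $0$ to a defining set --- but your step (ii) asserts a false identity, and it sits exactly at the point you yourself flagged as delicate. Complementing the strict inequality $\varpi(a)<(q-1)m-\ell$ gives $\varpi(a)\ge (q-1)m-\ell$, and negating turns this into $\varpi(b)\le\ell$, not $\varpi(b)<\ell$. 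Hence the defining set of $\cPGRM_q(\ell,m)^\perp$ is $\{0\}\cup\{1\le b\le n-1:\varpi(b)\le \ell\}$, while the defining set of $(\gf(q)\bone)^\perp\cap \cPGRM_q(m(q-1)-\ell,m)$ is $\{0\}\cup\{1\le b\le n-1:\varpi(b)<\ell\}$; they differ by $\{b:\varpi(b)=\ell\}$, which is nonempty for every $\ell$ in the meaningful range $1\le \ell\le m(q-1)-1$. So steps (ii)--(iv) collapse: no amount of endpoint bookkeeping makes the two codes equal.

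A concrete counterexample: for $(q,m,\ell)=(2,3,1)$, $\cPGRM_2(1,3)$ is the $[7,4,3]$ Hamming code with defining set $\{1,2,4\}$, so its dual is the $[7,3,4]$ simplex code with defining set $\{0,1,2,4\}$; but $\cPGRM_2(2,3)$ has empty defining set, i.e.\ it is all of $\gf(2)^7$, so the right-hand side of the claimed identity is the $[7,6,2]$ even-weight code. What your computation proves, when carried out honestly, is
$$
\cPGRM_q(\ell,m)^\perp=\left(\gf(q)\bone\right)^\perp\cap\, \cPGRM_q\left(m(q-1)-\ell-1,\, m\right),
$$
which is the Assmus--Key duality expressed in the parametrization fixed by the paper's definition of the generator polynomial $g_{(q,m,\ell)}^{R}(x)$. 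In other words, the printed statement is off by one relative to the paper's own definition of $\cPGRM_q(\ell, m)$ and is false for all $\ell\ge 1$ (it holds only in the degenerate case $\ell=0$); your approach is the correct route to the corrected statement, but as a proof of the literal statement it fails at step (ii), and necessarily so, since no correct proof of that statement can exist.
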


When $q=2$, $\cPGRM_q(\ell, m)$ becomes the punctured binary Reed-Muller code. Hence, 
$\cPGRM_q(\ell, m)$ is indeed a generalisation of the original punctured binary Reed-Muller 
code. Other properties of the code $\cPGRM_q(\ell, m)$ can be found in \cite{AssmusKey92} and the book 
chapter \cite{AssmusKey98}. 

The only purpose of introducing the codes $\cPGRM_q(\ell, m)$ in this section is to show the difference between the punctured generalised Reed-Muller codes 
$\cPGRM_q(\ell, m)$ and the new family of generalised Reed-Muller codes to be introduced in 
the next section.

\section{Another generalisation of the punctured binary Reed-Muller codes}\label{sec-newGRMcodes}

\subsection{The newly generalised codes $\cGRM{(q,m, h)}$}

Let $m$ be a positive integer and let $n=q^m-1$, where $q=p^s$, $p$ is a prime 
and $s$ 
is a positive integer. For any integer $a$ with $0 \leq a \leq n-1$, we have the following $q$-adic 
expansion 
$$ 
a=\sum_{j=0}^{m-1} a_jq^j, 
$$
where $0 \leq a_j \leq q-1$. The Hamming weight of $a$, denoted by $\wt(a)$, is the the number 
of nonzero coordinates in the vector $(a_0, a_1, \cdots, a_{m-1})$.

Let $\alpha$ be a generator of $\gf(q^m)^*$. For any $1 \leq h \leq m$, we define a polynomial 
\begin{eqnarray}\label{eqn-generatorplym}
g_{(q,m,h)}(x)=\prod_{\myatop{1 \leq a \leq n-1}{1 \leq \wt(a) \leq h }} (x-\alpha^a).  
\end{eqnarray}
Since $\wt(a)$ is a constant function on each $q$-cyclotomic coset modulo $n$, 
$g_{(q, m, h)}(x)$ is a polynomial over $\gf(q)$. By definition, $g_{(q, m, h)}(x)$ is a 
divisor of $x^n-1$. 

Let $\cGRM{(q,m, h)}$ denote the cyclic code over 
$\gf(q)$ with length $n$ and generator polynomial $g_{(m,q,h)}(x)$.  
By definition, $g_{(q,m,m)}(x)=(x^n-1)/(x-1)$. Therefore, the code $\cGRM{(q,m,m)}$ is trivial. 
Below we consider the code $\cGRM{(q,m, h)}$ for $1 \leq h \leq m-1$ only.

To analyse this code, we set 
\begin{eqnarray}\label{eqn-Index3}
I(q,m, h)=\{1 \leq a \leq n-1: 1 \leq \wt(a) \leq h\}. 
\end{eqnarray} 
The dimension of the code $\cGRM{(q,m, h)}$ is equal to $n-|I(q,m,h)|$.

\begin{theorem}\label{thm-gRMcode} 
Let $m \geq 2$ and $1 \leq h \leq m-1$. Then  
$\cGRM{(q,m,h)}$ has parameters $[q^m-1, k, d]$, where 
$$ 
k=q^m-\sum_{i=0}^{h} \binom{m}{i} (q-1)^i. 
$$ 
and 
\begin{eqnarray}\label{eqn-bbound}
\frac{q^{h +1}-1}{q-1} \leq d \leq 2 q^h-1. 
\end{eqnarray}
\end{theorem}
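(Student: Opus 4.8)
The plan is to establish the dimension formula first, since it follows from a direct counting argument, and then to attack the minimum distance bounds, which I expect to be the substantive part of the proof.

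\medskip

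For the dimension, recall from the excerpt that $\dim \cGRM{(q,m,h)} = n - |I(q,m,h)|$, where $I(q,m,h)$ collects all $a$ with $1 \leq a \leq n-1$ and $1 \leq \wt(a) \leq h$. I would count instead the complementary set. The integers $a$ in the range $0 \leq a \leq n-1 = q^m - 2$ are in bijection with their $q$-adic expansion vectors $(a_0, \dots, a_{m-1}) \in \{0, 1, \dots, q-1\}^m$, except that $a = n = q^m-1$ (the all-$(q-1)$ vector) falls outside the range; but since we only go up to $a \le n-1$, we are really parametrising all vectors in $\{0,\dots,q-1\}^m$ \emph{except} the all-zero vector (which is $a=0$) when restricting to $\wt \ge 1$. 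The number of vectors of Hamming weight exactly $i$ is $\binom{m}{i}(q-1)^i$, since we choose $i$ nonzero coordinates and assign each one of $q-1$ nonzero values. Summing over $0 \le i \le h$ gives $\sum_{i=0}^h \binom{m}{i}(q-1)^i$ vectors of weight at most $h$; removing the single weight-$0$ vector (the $i=0$ term contributes $1$) yields $|I(q,m,h)| = \sum_{i=1}^h \binom{m}{i}(q-1)^i$. Hence $k = q^m - 1 - |I(q,m,h)| = q^m - \sum_{i=0}^h \binom{m}{i}(q-1)^i$, matching the claim. I would also verify that all these $a$ are genuinely distinct zeros of $x^n-1$ below degree $n$, so that $\deg g_{(q,m,h)} = |I(q,m,h)|$ exactly.

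\medskip

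For the minimum distance, the strategy is to bound $d$ from both sides. For the \textbf{lower bound} $d \ge (q^{h+1}-1)/(q-1)$, the natural tool is a BCH-type argument: the zeros of the code are precisely the $\alpha^a$ with $1 \le \wt(a) \le h$. I would look for the longest run of consecutive integers all lying in the defining zero set, since the BCH bound states that if $\alpha^b, \alpha^{b+1}, \dots, \alpha^{b+\delta-2}$ are all zeros then $d \ge \delta$. The quantity $(q^{h+1}-1)/(q-1) = 1 + q + \cdots + q^h$ strongly suggests that the relevant run of consecutive exponents is related to the integers whose $q$-adic digits are all $0$ or $1$ with at most $h$ ones — I would examine the consecutive integers $1, 2, \dots, (q^{h+1}-1)/(q-1) - 1$ and show each has Hamming weight at most $h$, so that $\delta = (q^{h+1}-1)/(q-1)$ consecutive zeros (starting from $\alpha^1$) are present; checking the $q$-weight of each integer in this initial segment is the concrete computation to carry out. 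For the \textbf{upper bound} $d \le 2q^h - 1$, I would exhibit an explicit low-weight codeword, or equivalently a low-weight polynomial divisible by $g_{(q,m,h)}(x)$; a promising candidate is something like $(x^{q^h}-1)(x-1)$ or a product tied to the factor $x^{q^h}-1$ and one extra binomial, whose trace/evaluation structure produces a codeword of weight $2q^h - 1$.

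\medskip

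The hard part will be the lower bound, and specifically pinning down the exact length of the consecutive run of zeros and confirming it is not longer (so the BCH bound is tight enough to reach $(q^{h+1}-1)/(q-1)$ but does not overshoot the stated interval). I expect the main obstacle to be the careful Hamming-weight bookkeeping: I must verify that every integer in the initial segment $\{1,2,\dots,(q^{h+1}-1)/(q-1)-1\}$ has $q$-weight at most $h$ while the next integer $(q^{h+1}-1)/(q-1)$ has weight exceeding $h$ (its $q$-adic digits being $h+1$ ones), which would break the run at exactly the right place. For the upper bound, the obstacle is guessing the correct explicit codeword; once a candidate of the form $f(x) g_{(q,m,h)}(x)$ with $\deg(fg) < n$ is written down, one computes its Hamming weight directly and checks it equals $2q^h-1$, which should be routine once the factorisation is identified.
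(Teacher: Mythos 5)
Your dimension count and your lower-bound argument are correct and coincide with the paper's proof: the number of $a \in \{1,\dots,n-1\}$ with $\wt(a)=i$ is $\binom{m}{i}(q-1)^i$, and every integer in $\{1,2,\dots,(q^{h+1}-1)/(q-1)-1\}$ has at most $h$ nonzero $q$-adic digits (if all $h+1$ digits of such an $a$ were nonzero, then $a \geq 1+q+\cdots+q^h$, a contradiction), so the BCH bound gives $d \geq (q^{h+1}-1)/(q-1)$. Your worry about the run being "too long" is a non-issue: the BCH bound is only a lower bound, and in any case the run breaks at $1+q+\cdots+q^h$, which has weight $h+1$.

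The genuine gap is the upper bound $d \leq 2q^h-1$. You have no construction, only the hope that a candidate of the form $(x^{q^h}-1)(x-1)$ "or a product tied to $x^{q^h}-1$" will work, and this hope fails: a codeword must be a multiple of $g_{(q,m,h)}(x)$, whose degree $\sum_{i=1}^{h}\binom{m}{i}(q-1)^i$ typically exceeds $q^h+1$ (e.g.\ $(q,m,h)=(3,4,1)$ gives degree $8 > 4$), so $(x^{q^h}-1)(x-1)$ is not even in the code; moreover it has Hamming weight $4$, and any weight-$4$ codeword would contradict your own lower bound whenever $(q^{h+1}-1)/(q-1)>4$. Producing an explicit codeword of weight exactly $2q^h-1$ is essentially the hard structure theory of generalized Reed--Muller codes, and the paper does not attempt it. Instead it proves the upper bound by a code-containment argument: set $\ell=(m-h)(q-1)-1=(q-1)\ell_1+\ell_0$ with $\ell_1=m-h-1$, $\ell_0=q-2$; observe that $\wt(a)\leq h$ forces $\varpi(a)\leq h(q-1)=(q-1)m-\ell-1$, so every zero of $g_{(q,m,h)}(x)$ is a zero of $g^{R}_{(q,m,\ell)}(x)$, i.e.\ $g_{(q,m,h)}(x)$ divides $g^{R}_{(q,m,\ell)}(x)$ and hence $\cPGRM_q(\ell,m)\subseteq \cGRM(q,m,h)$; then invoke the known minimum distance $(q-\ell_0)q^{m-\ell_1-1}-1=2q^h-1$ of $\cPGRM_q(\ell,m)$ (Theorem \ref{thm-PGRMcode}), whose minimum-weight codewords lie inside $\cGRM(q,m,h)$ and certify $d\leq 2q^h-1$. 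Without either this embedding or an explicit codeword, your proposal does not establish the upper half of (\ref{eqn-bbound}).
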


\begin{proof}
As shown earlier, $I(q,m,h)$ is the union of some $q$-cyclotomic cosets modulo $n$. 
The total number of elements in $\cN$ with Hamming weight $i$ is equal to $\binom{m}{i} (q-1)^i$. 
It then follows that 
$$ 
|I(q,m,h)|=\sum_{i=1}^{h} \binom{m}{i} (q-1)^i. 
$$
Hence, the dimension $k$ of the code is given by 
$$ 
k=q^m-1-\sum_{i=1}^{h} \binom{m}{i} (q-1)^i.
$$ 

Note that every integer $a$ with $1 \leq a \leq (q^{h+1}-1)/(q-1)-1$ has Hamming weight 
$\wt(a) \leq h$. By definition, 
$$ 
\left\{1,2,3, \cdots, (q^{h+1}-1)/(q-1)-1\right\} \subset I(q,m,h). 
$$
It then follows from the BCH bound that $d \geq (q^{h +1}-1)/(q-1)$. 

We now prove the upper bound on the minimum distance $d$ given in (\ref{eqn-bbound}). 
Define  
$$
\ell = (m-h)(q-1)-1 = (q-1)(m-h-1)+q-2=(q-1)\ell_1+\ell_0, 
$$ 
where $\ell_1=m-h-1$ and $\ell_0=q-2$. If $1 \leq a \leq n-1$ and $\wt(a) \leq h$, then 
$\varpi(a) \leq h(q-1) = (q-1)m -\ell -1$. It then follows that $g_{(q, m, h)}(x)$ divides 
$g_{(q, m, \ell)}^R(x)$. Consequently, $\cPGRM{(q,m, \ell)}$ is a subcode of $\cGRM{(q,m,h)}$.  
But by Theorem \ref{thm-PGRMcode}, the minimum distance of $\cPGRM{(q,m, \ell)}$ is equal to 
$$ 
(q-\ell_0)q^{m-\ell_1-1}-1=2q^h-1. 
$$ 
The desired upper bound on $d$ then follows. 
\end{proof}

When $q=2$, the code $\cGRM{(q,m, h)}$ clearly becomes the classical punctured binary 
Reed-Muller code.  Hence, $\cGRM{(q,m, h)}$ is indeed a generalisation of the original 
punctured binary Reed-Muller code. In addition, when $q=2$, the lower bound and the upper 
bound in (\ref{eqn-bbound}) become identical.

\begin{problem}\label{prob-one} 
Is it true that the minimum distance of the code $\cGRM{(q,m,h)}$ is equal to $(q^{h +1}-1)/(q-1)$? 
\end{problem}

\begin{example} 
The following is a list of examples of the code $\cGRM{(q,m,h)}$. 
\begin{itemize}
\item When $(q,m, h)=(3,3,1)$,  $\cGRM{(q,m,h)}$ has parameters $[26,20,4]$. 
\item When $(q,m, h)=(3,4,1)$,  $\cGRM{(q,m,h)}$ has parameters $[80,72,4]$. 
\item When $(q,m, h)=(3,4,2)$,  $\cGRM{(q,m,h)}$ has parameters $[80,48,13]$. 
\item When $(q,m, h)=(3,4,3)$,  $\cGRM{(q,m,h)}$ has parameters $[80,16,40]$.
\item When $(q,m, h)=(4,3,1)$,  $\cGRM{(q,m,h)}$ has parameters $[63,54,5]$. 
\end{itemize}
\end{example}

\subsection{The dual codes $\cGRM{(q,m, h)}^\perp$}

When $q=2$, the parameters of the dual code $\cGRM{(q,m, h)}^\perp$ are given by Theorems \ref{thm-PGRMcode} 
and \ref{thm-PGRMcodedual}. Therefore, we need to study the dual code $\cGRM{(q,m, h)}^\perp$ 
for the case $q>2$ only. 

We will need the following theorem (\cite{HT}, see also \cite[p. 153]{HP03}). 

\begin{theorem}[Hartmann-Tzeng Bound]\label{thm-Hartmann-Tzeng}  
Let $\C$ be a cyclic code of length $n$ over $\gf(q)$ with defining set $T$. Let $A$ be a 
set of $\delta-1$ consecutive elements of $T$ and $B=\{jb \bmod{n}: 0 \leq j \leq s\}$, 
where $\gcd(b, n) < \delta$. If $A+B \subseteq T$, then the minimum distance $d$ of $\C$ 
satisfies $d \geq \delta +s$. 
\end{theorem}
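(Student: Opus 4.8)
The plan is to prove the bound directly by a locator/evaluator (syndrome) argument, showing that every nonzero codeword $c \in \C$ has Hamming weight $w \ge \delta + s$. Let $\beta \in \gf(q^m)$ be the primitive $n$-th root of unity fixed in Section~II, write the $\delta-1$ consecutive elements of $A$ as $a_0, a_0+1, \ldots, a_0+\delta-2$, and write $B=\{0,b,2b,\ldots,sb\}$. Suppose $c\neq 0$ has support positions $p_1,\ldots,p_w$ with nonzero coordinates $y_1,\ldots,y_w$, and set $\xi_k=\beta^{p_k}$ (distinct and nonzero) and $\eta_k=\xi_k^{\,b}=\beta^{\,b p_k}$. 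Since $A+B\subseteq T$, every exponent $a_0+i+jb$ lies in the defining set, so the corresponding syndromes vanish:
$$ \sum_{k=1}^{w}\bigl(y_k\xi_k^{a_0}\bigr)\,\xi_k^{\,i}\,\eta_k^{\,j}=0, \qquad 0\le i\le \delta-2,\ \ 0\le j\le s. $$
Replacing $y_k$ by the nonzero scalar $y_k\xi_k^{a_0}$, the whole task reduces to a rank statement about the matrix with columns indexed by $k$, rows indexed by the ``box'' $(i,j)$, and entry $\xi_k^{\,i}\eta_k^{\,j}$.

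Next I would record how $\gcd(b,n)<\delta$ enters. Two support positions produce the same value $\eta$ exactly when $p_k\equiv p_{k'}\pmod{n/\gcd(b,n)}$, and the residues mod $n$ collapsing to a fixed residue mod $n/\gcd(b,n)$ number exactly $\gcd(b,n)$; hence each \emph{fiber} (set of support positions sharing a common $\eta$-value) has size at most $\gcd(b,n)\le\delta-1$. Let $r$ be the number of distinct values among $\eta_1,\ldots,\eta_w$. The plan is then to split into two regimes according to the size of $r$.

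If $r\ge s+1$: for each $j$ the vector with $k$-th coordinate $y_k\eta_k^{\,j}$ lies in $V=\{v:\sum_k v_k\xi_k^{\,i}=0,\ 0\le i\le\delta-2\}$, which has dimension $w-(\delta-1)$ because the distinct $\xi_k$ give a Vandermonde of rank $\delta-1$ (the sub-case $w<\delta-1$ forces $c=0$ at once). The $s+1$ vectors produced by $j=0,\ldots,s$ are linearly independent precisely because a polynomial of degree $\le s$ vanishing on $r\ge s+1$ distinct $\eta$-values must be zero; hence $s+1\le\dim V=w-(\delta-1)$, i.e.\ $w\ge\delta+s$. If instead $r\le s$: for each fixed $i$ the $r$ fiber-sums $\sum_{k\in\text{fiber }l}y_k\xi_k^{\,i}$ satisfy a Vandermonde system in the $s+1$ consecutive powers of the $r\le s+1$ distinct $\eta$-values, which forces every fiber-sum to vanish; then within each fiber the Vandermonde in the (distinct) $\xi_k$ has full column rank because the fiber size is $\le\delta-1$, forcing all $y_k=0$ — so no nonzero codeword can have $r\le s$. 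Combining the two regimes, every nonzero codeword has weight $\ge\delta+s$.

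The hard part will be the interplay of the two Vandermonde structures, together with the realization that a single global determinant argument cannot work: over a finite field a matrix $[\xi_k^{\,t_l}]$ with merely \emph{distinct} exponents $t_l$ may be singular, so one cannot simply harvest $w$ ``independent'' exponents from $A+B$. The resolution — and the delicate point I would have to get exactly right — is that the exponent box factors into a length-$(\delta-1)$ consecutive run (controlled by a Vandermonde in the $\xi_k$) and a length-$(s+1)$ arithmetic progression (controlled by a Vandermonde in the $\eta_k$), and that the hypothesis $\gcd(b,n)<\delta$ is precisely what caps every fiber at $\delta-1$ points so that the within-fiber Vandermonde is injective. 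The regime $r\le s$, where the $\eta$-values do not yet separate the support, is where the fiber bound is indispensable; the regime $r\ge s+1$ is essentially the BCH bound applied to the $\eta$-shifted syndromes.
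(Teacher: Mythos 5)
Your proof is correct. Note, however, that the paper itself does not prove this statement: Theorem~\ref{thm-Hartmann-Tzeng} is quoted there as a known result, with the reader referred to \cite{HT} and to \cite[p.~153]{HP03}, so there is no in-paper argument to compare against. Your argument is a complete and self-contained version of the classical proof of the Hartmann--Tzeng bound, and each step checks out: the syndrome equations over the exponent box $A+B$ are set up correctly; the fiber bound is exactly right, since $bp_k\equiv bp_{k'}\pmod{n}$ iff $p_k\equiv p_{k'}\pmod{n/\gcd(b,n)}$, and each residue class modulo $n/\gcd(b,n)$ contains exactly $\gcd(b,n)\le\delta-1$ positions; in the regime $r\ge s+1$ the polynomial-root argument does give linear independence of the $s+1$ shifted vectors inside the kernel $V$, whose dimension $w-(\delta-1)$ is justified because any $\delta-1$ columns of $[\xi_k^i]$ form a nonsingular Vandermonde; and in the regime $r\le s$ the outer Vandermonde in the distinct $\eta$-values forces all fiber sums to vanish, after which the inner Vandermonde (injective precisely because fibers have at most $\delta-1$ elements) forces all coordinates to vanish, so no nonzero codeword lives there. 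Your closing remark about why a single generalized-Vandermonde determinant cannot work over a finite field is also accurate and correctly identifies where the hypothesis $\gcd(b,n)<\delta$ is indispensable. This is essentially the argument of the original paper of Hartmann and Tzeng and of the treatment in Huffman--Pless, so you have supplied, correctly, exactly the proof the paper chose to omit.
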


The following theorem gives information on the parameters of the dual code $\cGRM{(q,m,h)}^\perp$.

\begin{theorem}\label{thm-gRMcodedual} 
Let $m \geq 2$ and $1 \leq h \leq m-1$. The dual code 
$\cGRM{(q,m,h)}^\perp$ has parameters $[q^m-1, k^\perp, d^\perp]$, where 
$$ 
k^\perp=\sum_{i=1}^{h} \binom{m}{i} (q-1)^i. 
$$ 
The minimum distance $d^\perp$ of $\cGRM{(q,m,h)}^\perp$ is lower bounded by 
$$ 
d^\perp \geq q^{m-h} + q-2. 
$$
\end{theorem}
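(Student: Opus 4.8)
The plan has two parts. For the dimension there is nothing to do beyond bookkeeping: since $\cGRM{(q,m,h)}^\perp$ has length $n=q^m-1$ and $\dim \cGRM{(q,m,h)}^\perp = n-k$, the value $k=q^m-1-\sum_{i=1}^{h}\binom{m}{i}(q-1)^i$ from Theorem \ref{thm-gRMcode} gives at once $k^\perp=\sum_{i=1}^{h}\binom{m}{i}(q-1)^i$. The distance bound is where the work lies, and the first step is to write down the defining set $T^\perp$ of the dual code explicitly.

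Let $T=I(q,m,h)$ be the defining set of $\cGRM{(q,m,h)}$. The standard description of the dual gives $T^\perp=\{-j\bmod n: j\in\cN\setminus T\}$. For $i\neq 0$ one has $-i\equiv n-i\pmod{n}$, so $i\in T^\perp$ exactly when $n-i\notin T$, i.e. when $\wt(n-i)>h$. The decisive observation is a digit-complement identity: if $i=\sum_{k}i_kq^k$, then the $q$-adic digits of $n-i=(q^m-1)-i$ are $q-1-i_k$, so $\wt(n-i)=m-u(i)$, where $u(i)$ denotes the number of $q$-adic digits of $i$ equal to $q-1$. Handling $i=0$ directly (it lies in $T^\perp$ because $0\notin T$), I would conclude
$$
T^\perp=\{\,0\le i\le n-1: u(i)<m-h\,\}.
$$

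The central step is then a Hartmann-Tzeng argument (Theorem \ref{thm-Hartmann-Tzeng}). I would take $A=\{0,1,\ldots,q^{m-h}-2\}$, a block of $\delta-1=q^{m-h}-1$ consecutive elements lying in $T^\perp$: each such $i$ has its nonzero digits confined to the lowest $m-h$ positions and cannot have all of them equal to $q-1$ (that forces $i=q^{m-h}-1$, which is the gap just beyond $A$), so $u(i)\le m-h-1$. I would then set $b=q^{m-h}$ and $B=\{jb\bmod n:0\le j\le q-2\}$, so that $s=q-2$. Since $\gcd(q^{m-h},q^m-1)=1<q^{m-h}=\delta$, the hypotheses hold once $A+B\subseteq T^\perp$ is verified; the bound then gives $d^\perp\ge \delta+s=q^{m-h}+q-2$, as claimed. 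Note that a plain BCH estimate on $A$ alone only yields $d^\perp\ge q^{m-h}$, so the second direction $B$ is essential.

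The crux, and the step I expect to be the main obstacle, is precisely the inclusion $A+B\subseteq T^\perp$. An element $i+jq^{m-h}$ carries the digits of $i$ in positions $0,\ldots,m-h-1$ and the digit $j$ in position $m-h$, with zeros above; because $j\le q-2\neq q-1$ and the high positions vanish, no new $(q-1)$-digit is created, so $u(i+jq^{m-h})=u(i)\le m-h-1<m-h$, placing the sum in $T^\perp$. Thus the whole argument reduces to correctly identifying $T^\perp$ through the digit-complement identity and then making the compatible choice of $(A,B)$; the verification itself is an elementary digit computation.
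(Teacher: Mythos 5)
Your proof is correct, and it reaches the bound with the same engine as the paper---the Hartmann--Tzeng bound with $\delta=q^{m-h}$ and $s=q-2$---but through a genuinely different setup. The paper never works with the dual's defining set directly: it passes to the complement code $\cGRM{(q,m,h)}^c$ generated by the check polynomial (which has the same length, dimension and minimum distance as the dual), whose defining set is $\{0\}\cup\{1\le b\le n-1:\wt(b)\ge h+1\}$, and applies Hartmann--Tzeng there with $b=q^{m-h}+q^{m-h+1}+\cdots+q^{m-1}$, $A=\{a+b:1\le a\le q^{m-h}-1\}$, and $B=\{jb:0\le j\le q-2\}$. You instead compute the dual's defining set exactly, via the digit-complement identity $\wt(n-i)=m-u(i)$, obtaining $\{0\le i\le n-1:u(i)<m-h\}$, and apply Hartmann--Tzeng with the initial run $A=\{0,1,\ldots,q^{m-h}-2\}$ and direction $b=q^{m-h}$. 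Your choices buy two concrete simplifications: $\gcd(b,n)=\gcd(q^{m-h},q^m-1)=1$ is immediate, whereas the paper must estimate $\gcd\bigl((q^h-1)/(q-1),q^m-1\bigr)\le q^{\gcd(h,m)}-1$ and invoke $\gcd(h,m)\le m-h$; and your $A+B$ stays inside $\{0,1,\ldots,n-1\}$ (its largest element is $(q-1)q^{m-h}-2<n$), so you avoid the wraparound $n\equiv 0$ that occurs in the paper's $A+B$ and must there be identified with the element $0$ of the defining set. The two arguments are in fact mirror images of one another: the negation $x\mapsto n-x$ carries the complement code's defining set onto the dual's, matching the two Hartmann--Tzeng configurations. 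The dimension claim is the same routine bookkeeping in both.
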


\begin{proof}
The desired conclusion on the dimension of $\cGRM{(q,m,h)}^\perp$ follows from the dimension 
of $\cGRM{(q,m,h)}$. What remains to 
be proved is the lower bound on the minimum distance $d^\perp$. Let $\cGRM{(q,m,h)}^c$ 
denote the complement of $\cGRM{(q,m,h)}$, which is generated by the check polynomial 
of $\cGRM{(q,m,h)}$. It is well known that $\cGRM{(q,m,h)}^c$ 
and $\cGRM{(q,m,h)}^\perp$ have the same length, dimension and minimum distance. 

By definition, the defining set of $\cGRM{(q,m,h)}^c$ is 
$$ 
I{(q,m,h)}^c:=\{0\} \cup \{1 \leq b \leq n-1: \wt(b) \geq h +1\}. 
$$ 

Let $b=q^{m-h}+q^{m-h+1}+ \cdots + q^{m-1}$. Define 
$$ 
A=\{a+b: 1 \leq a \leq q^{m-h}-1\} 
$$ 
and 
$$ 
B=\{jb: 0 \leq j \leq q-2\}. 
$$
It is straightforward to verify that $A+B \subset I(q, m, h)^c$. Note that $n 
\in A+B$. In this case, we identify $n$ with $0$.

Clearly, $A$ is a set of $q^{m-h}-1$ consecutive elements in the defining set 
$I(q, m, h)^c$. Note that 
$$ 
\gcd(b,n)=\gcd\left( \frac{q^h-1}{q-1}, q^m-1  \right) 
\leq \gcd(q^h-1, q^m-1) =q^{\gcd(h, m)}-1. 
$$
By assumption, $1 \leq h \leq m-1$. We then have $\gcd(h, m) \leq m-h$. 
Consequently,  
$$ 
\gcd(b,n) < q^{m-h}. 
$$
The desired conclusion on $d^\perp$ then follows from Theorem \ref{thm-Hartmann-Tzeng}. 
\end{proof}

When $q=2$, the lower bound on the minimum distance $d^\perp$ of $\cGRM{(q,m,h)}^\perp$ given 
in Theorem \ref{thm-gRMcodedual} is achieved. It is open if this lower bound can be improved for 
$q>2$.

\begin{problem}\label{prob-two} 
Determine the minimum distance $d^\perp$ of the code $\cGRM{(q,m,h)}^\perp$. 
\end{problem}

To further study the dual code $\cGRM{(q,m, h)}^\perp$, we need to establish relations 
between $\wt(a)$ and $\wt(n-a)$ for $a \in \cN$. Let $a \in \cN$ and let 
$$ 
a=\sum_{j=0}^{m-1} a_j q^j 
$$  
be the $q$-adic expansion of $a$. We define 
$$ 
\gamma(a)=|\{0 \leq j \leq m-1: 1 \leq a_j < q-1 \}|=\wt(a)-|\{0 \leq j \leq m-1: a_j = q-1 \}|. 
$$
Then we have the following lemma whose proof is straightforward and omitted. 

\begin{lemma}\label{lem-april91} 
For $a \in \cN$, we have 
$$ 
\wt(n-a)=m-\wt(a)+\gamma(a)=m-|\{0 \leq j \leq m-1: a_j = q-1 \}|.
$$
\end{lemma}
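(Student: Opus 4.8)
The plan is to exploit the fact that $n = q^m - 1$ has an especially simple $q$-adic expansion. First I would note that
$$n = q^m - 1 = \sum_{j=0}^{m-1} (q-1) q^j,$$
so every $q$-adic digit of $n$ equals $q - 1$. This is essentially the only structural fact the argument needs.

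Next, writing $a = \sum_{j=0}^{m-1} a_j q^j$ with $0 \leq a_j \leq q-1$, I would subtract term by term to obtain
$$n - a = \sum_{j=0}^{m-1} (q - 1 - a_j) q^j.$$
Because $0 \leq q - 1 - a_j \leq q - 1$ holds for each $j$, no borrowing occurs between positions, and this is genuinely the $q$-adic expansion of $n - a$; in particular the $j$-th $q$-adic digit of $n-a$ is exactly $q - 1 - a_j$. The main (and only) point requiring any care is this borrow-free observation, since it is what licenses reading off the digits of $n-a$ directly from those of $a$.

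Finally I would count the nonzero digits of $n - a$. Since $q - 1 - a_j \neq 0$ precisely when $a_j \neq q - 1$, we get
$$\wt(n-a) = |\{0 \leq j \leq m-1 : a_j \neq q-1\}| = m - |\{0 \leq j \leq m-1 : a_j = q-1\}|,$$
which is the second claimed identity. The first identity then follows immediately by substituting the defining relation $\gamma(a) = \wt(a) - |\{0 \leq j \leq m-1 : a_j = q-1\}|$ into $m - \wt(a) + \gamma(a)$ and cancelling the two occurrences of $\wt(a)$. There is no real obstacle here; the whole proof hinges on the all-$(q-1)$ expansion of $n$ rendering the subtraction free of carries.
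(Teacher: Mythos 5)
Your proof is correct. The paper itself omits the proof of this lemma, declaring it straightforward, and your argument is exactly the natural one it has in mind: since $n=\sum_{j=0}^{m-1}(q-1)q^j$, the subtraction $n-a$ is borrow-free with $j$-th digit $q-1-a_j$, so the nonzero digits of $n-a$ are precisely the positions where $a_j \neq q-1$, which yields $\wt(n-a)=m-|\{0 \leq j \leq m-1: a_j=q-1\}|$ and, via the definition of $\gamma(a)$, the first equality as well.
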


For $0 \leq i \leq m$, define 
$$ 
N(i)=\{a \in \cN: \wt(a)=i\}  
$$ 
and 
$$ 
-N(i)=\{n-a: a \in N(i)\}.   
$$ 
Clearly, $|N(i)|=\binom{m}{i} (q-1)^i$. 

\vspace{.2cm}
The following lemma will be useful later. 

\begin{lemma}\label{lem-april92} 
In the set $-N(i)$, there are exactly $\binom{m}{i} \binom{i}{j} (q-2)^j$ elements 
with Hamming weight $m-i+j$ for each $j$ with $0 \leq j \leq i$.  
\end{lemma}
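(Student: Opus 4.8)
The plan is to count, for a fixed element $a \in N(i)$, how the Hamming weight of $n-a$ depends on the structure of $a$'s $q$-adic digits, and then sum over all of $N(i)$. By Lemma~\ref{lem-april91}, we have $\wt(n-a) = m - |\{0 \leq j \leq m-1 : a_j = q-1\}|$, so the weight of $n-a$ is completely determined by the number of digits of $a$ that equal $q-1$. Writing $t = |\{j : a_j = q-1\}|$, we get $\wt(n-a) = m - t$. I would therefore reparametrise: the condition $\wt(n-a) = m - i + j$ is equivalent to $t = i - j$, i.e. exactly $i-j$ of the digits of $a$ equal $q-1$.

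First I would partition $N(i)$ according to this value $t$. An element $a \in N(i)$ has exactly $i$ nonzero digits among its $m$ positions, and among those $i$ nonzero digits, $t$ of them equal $q-1$ while the remaining $i-t$ take values in $\{1, 2, \ldots, q-2\}$. To count such $a$, I would choose the $i$ positions of the nonzero digits in $\binom{m}{i}$ ways, then choose which $t$ of those $i$ nonzero positions carry the value $q-1$ in $\binom{i}{t}$ ways, and finally assign each of the remaining $i-t$ nonzero positions a value in $\{1, \ldots, q-2\}$, giving $(q-2)^{i-t}$ possibilities. Hence the number of $a \in N(i)$ with exactly $t$ digits equal to $q-1$ is $\binom{m}{i}\binom{i}{t}(q-2)^{i-t}$.

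Substituting $t = i - j$ converts this count into $\binom{m}{i}\binom{i}{i-j}(q-2)^{j}$, and since $\binom{i}{i-j} = \binom{i}{j}$, this equals $\binom{m}{i}\binom{i}{j}(q-2)^{j}$, which is exactly the claimed number of elements of $-N(i)$ with Hamming weight $m - i + j$. The map $a \mapsto n - a$ is a bijection from $N(i)$ onto $-N(i)$, so counting elements of $-N(i)$ by weight is the same as counting elements of $N(i)$ by the value of $t$; this justifies transferring the count.

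The only point requiring a little care is the boundary case $t = i$ (equivalently $j = 0$), where all nonzero digits equal $q-1$ and the factor $(q-2)^{0} = 1$ correctly counts the single choice of values. There is no genuine obstacle here: the proof is essentially a digit-counting argument, and the main thing to get right is the bookkeeping that relates the three choices (positions of nonzero digits, which of them are maximal, and the values of the non-maximal ones) and the substitution $t = i - j$. One should also verify that the total over $j$ recovers $|N(i)| = \binom{m}{i}(q-1)^i$, which follows from $\sum_{j=0}^{i}\binom{i}{j}(q-2)^{j} = (q-1)^i$ by the binomial theorem, serving as a consistency check rather than part of the argument proper.
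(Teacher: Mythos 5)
Your proof is correct and follows essentially the same route as the paper's: both reduce the claim via Lemma~\ref{lem-april91} to counting elements of $N(i)$ by their digit structure, choosing the $i$ nonzero positions, the positions carrying the value $q-1$, and the values of the remaining digits. The only cosmetic difference is that you parametrise by $t=|\{j: a_j=q-1\}|$ and substitute $t=i-j$, whereas the paper works directly with $\gamma(a)=i-t$, which makes the substitution unnecessary; your explicit remark that $a\mapsto n-a$ is a bijection, left implicit in the paper, is a small point in your favour.
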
  

\begin{proof}
Let $a \in N(i)$. By definition, $\wt(a)=i$. It follows from Lemma \ref{lem-april91} that 
$$ 
\wt(n-a)=m-i + \gamma(a). 
$$ 
It is easily seen that 
$$ 
|\{1 \leq a \leq n-1: \wt(a)=i \mbox{ and } \gamma(a)=j\}|=\binom{m}{i} \binom{i}{j} (q-2)^j. 
$$
This completes the proof. 
\end{proof}

\begin{theorem}\label{thm-may12}
$\cGRM{(q,m, h)}^\perp$ is a proper subcode of $\cGRM{(q,m, m-1-h)}$. 
When $q=2$, $\cGRM{(q,m, h)}^\perp$ is the even-weight subcode of $\cGRM{(q,m, m-1-h)}$. 
\end{theorem}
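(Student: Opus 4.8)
The plan is to argue entirely at the level of defining sets. Recall the standard fact that if a cyclic code $\C$ of length $n$ over $\gf(q)$ has defining set $T$, then its dual $\C^\perp$ has defining set $\cN \setminus (-T)$, where $-T = \{-t \bmod n : t \in T\}$, and that one cyclic code is contained in another precisely when the defining set of the larger code is contained in that of the smaller. Since $\cGRM{(q,m,h)}$ has defining set $I(q,m,h) = \{a \in \cN : 1 \le \wt(a) \le h\}$, its dual has defining set $T^\perp := \cN \setminus (-I(q,m,h))$, while $\cGRM{(q,m,m-1-h)}$ has defining set $I(q,m,m-1-h)$. The theorem therefore amounts to showing $I(q,m,m-1-h) \subsetneq T^\perp$, together with an exact determination of $T^\perp$ when $q = 2$.

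First I would prove the inclusion by bounding the Hamming weights of the elements of $-I(q,m,h)$ from below. The key step is Lemma \ref{lem-april91}: for $a \in I(q,m,h)$ it gives $\wt(n-a) = m - |\{j : a_j = q-1\}| \ge m - \wt(a) \ge m-h$, since at most $\wt(a) \le h$ of the digits of $a$ can equal $q-1$. Hence every element of $-I(q,m,h)$ has weight at least $m-h$, so every $a \in \cN$ with $\wt(a) \le m-h-1$ lies in $T^\perp$; in particular $I(q,m,m-1-h) \subseteq T^\perp$, which yields $\cGRM{(q,m,h)}^\perp \subseteq \cGRM{(q,m,m-1-h)}$. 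Properness is then immediate: since $0 \notin I(q,m,h)$ we have $0 \notin -I(q,m,h)$, so $0 \in T^\perp$, whereas $0 \notin I(q,m,m-1-h)$ because $\wt(0)=0$. Thus the inclusion of defining sets is strict, and the code inclusion is proper. (The extra zero $\alpha^0 = 1$ forces every codeword of $\cGRM{(q,m,h)}^\perp$ to have vanishing coordinate sum.)

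For $q = 2$ I would upgrade the inclusion to an equality of defining sets. A digit equals $q-1 = 1$ exactly when it is nonzero, so $|\{j : a_j = 1\}| = \wt(a)$ and Lemma \ref{lem-april91} gives $\wt(n-a) = m - \wt(a)$ exactly; as $a \mapsto n-a$ is bitwise complementation, an involution of $\{1,\dots,n-1\}$, one gets $-I(2,m,h) = \{b \in \cN : m-h \le \wt(b) \le m-1\}$. Because no element of $\cN = \{0,\dots,n-1\}$ has weight $m$ (the all-ones word represents $n \equiv 0$, of weight $0$), this gives $T^\perp = \{0\} \cup I(2,m,m-1-h)$ exactly, so the generator polynomial of $\cGRM{(2,m,h)}^\perp$ is $(x-1)$ times that of $\cGRM{(2,m,m-1-h)}$ --- that is, the even-weight subcode. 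The main obstacle is the weight bookkeeping in the middle step: one must count $|\{j : a_j = q-1\}|$ rather than $\wt(a)$ (which is exactly why for $q>2$ only an inclusion survives, the finer count being governed by Lemma \ref{lem-april92}), and one must not overlook that the maximal-weight digit pattern corresponds to $0 \in \cN$.
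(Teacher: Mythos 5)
Your proposal is correct and follows essentially the same route as the paper: both reduce the statement to the inclusion of defining sets $I(q,m,m-1-h) \subset \cN \setminus (-I(q,m,h))$, prove it via the weight identity of Lemma \ref{lem-april91}, and for $q=2$ note that $\wt(n-a)=m-\wt(a)$ turns the inclusion into the equality $\{0\} \cup I(2,m,m-1-h) = \cN \setminus (-I(2,m,h))$. Your write-up is in fact slightly more complete than the paper's, since you make explicit the properness argument (via $0$ lying in one defining set but not the other) and the identification of the generator polynomial $(x-1)g_{(2,m,m-1-h)}(x)$ with the even-weight subcode, both of which the paper leaves implicit.
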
 

\begin{proof}
By definition, the defining set of $\cGRM{(q,m, h)}^\perp$ is $-I{(q,m,h)}^c$. We now prove 
that  
$$ 
I{(q,m, m-1-h)} \subset -I{(q,m,h)}^c. 
$$ 
This is equivalent to proving that for every $a \in I{(q,m, m-1-h)}$, $n-a \in I{(q,m,h)}^c$. 
This is clearly true by Lemma \ref{lem-april91}. Consequently, $\cGRM{(q,m, h)}^\perp$ is a proper subcode of $\cGRM{(q,m, m-1-h)}$. 

When $q=2$, we have always the equality that $\wt(a)=m-\wt(n-a)$ for all $a$. Hence, in this case, 
we have 
$$ 
\{0\} \cup I{(q,m, m-1-h)} = -I{(q,m,h)}^c.
$$
As a result, $\cGRM{(q,m, h)}^\perp$ is the even-weight subcode of $\cGRM{(q,m, m-1-h)}$ 
when $q=2$.   
\end{proof}

Experimental data shows that one of $I{(q,m, m-h)}$ and $-I{(q,m,h)}^c$ is not a subset 
of the other. Consequently, one of $\cGRM{(q,m, h)}^\perp$ and $\cGRM{(q,m, m-h)}$ is not 
a subcode of the other.

\begin{example} 
The following is a list of examples of the code $\cGRM{(q,m,h)}^\perp$. 
\begin{itemize}
\item When $(q,m, h)=(2,4,2)$, the code $\cGRM{(q,m,h)}^\perp$ has parameters $[15,10,4]$.  
In this case, the lower bound on the minimum distance is achieved. 
\item When $(q,m, h)=(3,3,1)$, the code $\cGRM{(q,m,h)}^\perp$ has parameters $[26,6,15]$.  
In this case, the lower bound on the minimum distance is $10$. 
\item When $(q,m, h)=(3,3,2)$, the code $\cGRM{(q,m,h)}^\perp$ has parameters $[26,18,6]$.  
In this case, the lower bound on the minimum distance is $4$. 
\end{itemize}
\end{example}

\subsection{The BCH cover of the cyclic code $\cGRM{(q,m,h)}$}

Recall that $n=q^m-1$. 
For any $i$ with $0 \leq i \leq n-1$, let $\m_i(x)$ denote the minimal polynomial of $\alpha^i$ 
over $\gf(q)$. For any $2 \leq \delta \leq n$, define 
\begin{eqnarray}\label{eqn-BCHgeneratorPolyn}
\bar{g}_{(q,n,\delta,b)}(x)=\lcm(\m_{b}(x), \m_{b+1}(x), \cdots, \m_{b+\delta-2}(x)), 
\end{eqnarray} 
where $b$ is an integer, $\lcm$ denotes the least common multiple of these minimal polynomials, and the addition 
in the subscript $b+i$ of $\m_{b+i}(x)$ always means the integer addition modulo $n$. 
Let $\BCH{(q, n, \delta,b)}$ denote the cyclic code of length $n$ with generator 
polynomial $\bar{g}_{(q, n,\delta, b)}(x)$. 

When $b=1$, the set $\BCH{(q, n, \delta, b)}$ is called a \emph{narrow-sense primitive BCH code} with \emph{designed distance} $\delta$. 

The BCH cover of a cyclic code is the BCH code with the smallest dimension containing the cyclic code 
as a subcode. 

\begin{theorem}
$\cGRM{(q,m,h)}$ is a subcode of $\BCH{(q, n, (q^{h +1}-1)/(q-1), 1)}$. 
\end{theorem}

\begin{proof}
In the proof of Theorem \ref{thm-gRMcode}, we have shown that 
$$ 
\{1,2,3, \cdots, (q^{h+1}-1)/(q-1)-1\} \subset I(q,m,h). 
$$ 
Hence, the generator polynomial of $\BCH{(q, n, (q^{h +1}-1)/(q-1), 1)}$ is a divisor of 
that of $\cGRM{(q,m,h)}$. Hence, $\cGRM{(q,m,h)}$ is a subcode of  $\BCH{(q, n, (q^{h +1}-1)/(q-1), 1)}$. 
\end{proof}

When $h =1$ or $h=m-1$, the two codes are identical. In other cases, the dimension of 
the code $\BCH{(q, n, (q^{h +1}-1)/(q-1), 1)}$ is larger than that of $\cGRM{(q,m,h)}$.  

The BCH cover of $\cGRM{(q,m,h)}$ is $\BCH{(q, n, (q^{h +1}-1)/(q-1), 1)}$ if the 
minimum distance of $\cGRM{(q,m,h)}$ is indeed equal to $(q^{h +1}-1)/(q-1)$.

\subsection{Comparisons of the two codes $\cPGRM_q(\ell, m)$ and $\cGRM(q,m,h)$} 

In this subsection, we compare the two codes $\cPGRM_q(\ell, m)$ and $\cGRM(q,m,h)$ 
and make some comments. 

First of all, the two codes $\cPGRM_q(\ell, m)$  and $\cGRM(q,m,h)$ are clearly different, 
as their dimensions and minimum distances are different. Secondly, Theorem \ref{thm-may12} 
tells us that $\cGRM(q,m,h)^\perp$ is indeed a subcode of $\cGRM(q,m,m-1-h)$. But the 
code $\cGRM(q,m,h)$ does not have the property of Theorem \ref{thm-PGRMcodedual}.

\section{The extended codes $\widehat{\cGRM}(q,m,h)$ and their applications in combinatorics} 

Let $\widehat{\cGRM}(q,m,h)$ denote the extended code of $\cGRM(q,m,h)$ defined in Section 
\ref{sec-newGRMcodes}. The following theorem follows directly from Theorem \ref{thm-gRMcode}. 

\begin{theorem}\label{thm-gRMcodeExt} 
Let $m \geq 2$ and $1 \leq h \leq m-1$. Then  
$\widehat{\cGRM}{(q,m,h)}$ has parameters $[q^m, k, d]$, where 
$$ 
k=q^m-\sum_{i=0}^{h} \binom{m}{i} (q-1)^i. 
$$ 
and 
\begin{eqnarray}\label{eqn-bbound2}
\frac{q^{h +1}-1}{q-1} \leq d \leq 2 q^h. 
\end{eqnarray}
\end{theorem}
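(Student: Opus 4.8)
The plan is to obtain everything directly from Theorem~\ref{thm-gRMcode} together with the standard behaviour of the extension operation, exactly as the sentence preceding the statement suggests. Recall that the extended code $\widehat{\cGRM}(q,m,h)$ is formed from $\cGRM(q,m,h)$ by appending to each codeword $\bc=(c_0,c_1,\ldots,c_{n-1})$ an overall parity-check symbol $c_n=-\sum_{i=0}^{n-1}c_i$, yielding a code of length $n+1=q^m$. First I would note that this extension is an injective linear map, since deleting the appended coordinate recovers $\bc$; hence the dimension is unchanged. This immediately gives length $q^m$ and dimension $k=q^m-\sum_{i=0}^{h}\binom{m}{i}(q-1)^i$, the same $k$ as in Theorem~\ref{thm-gRMcode}.

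For the minimum distance, write $d$ for the minimum distance of $\cGRM(q,m,h)$ and $\widehat{d}$ for that of $\widehat{\cGRM}(q,m,h)$. The elementary fact I would invoke is the two-sided bound $d \le \widehat{d} \le d+1$. The lower inequality holds because deleting the last coordinate is a projection, so $\wt(\widehat{\bc}) \ge \wt(\bc) \ge d$ for every nonzero $\bc$; the upper inequality holds because, taking a codeword $\bc$ of weight exactly $d$, the appended symbol contributes at most $1$ to its weight, so $\widehat{d}\le\wt(\widehat{\bc})\le d+1$.

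It then remains to substitute the bounds of Theorem~\ref{thm-gRMcode}. For the lower bound, $\widehat{d}\ge d\ge (q^{h+1}-1)/(q-1)$ reproduces the left-hand side of (\ref{eqn-bbound2}) verbatim. For the upper bound, $\widehat{d}\le d+1\le (2q^h-1)+1=2q^h$ gives the right-hand side of (\ref{eqn-bbound2}). This closes the argument.

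I do not anticipate any genuine obstacle: the entire content is the combination of the projection/parity-check inequalities with Theorem~\ref{thm-gRMcode}. The only point requiring mild care is to confirm that $\widehat{\cGRM}(q,m,h)$ is the standard overall-parity-check extension, so that the dimension is truly preserved and the bound $d\le\widehat{d}\le d+1$ applies; once this is fixed, the parameters follow by direct substitution.
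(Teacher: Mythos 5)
Your proposal is correct and takes essentially the same route as the paper: the paper gives no separate proof, stating only that the theorem ``follows directly from Theorem~\ref{thm-gRMcode},'' and your argument supplies exactly the standard facts being implicitly invoked --- that the overall-parity-check extension preserves length-plus-one and dimension, and that $d \le \widehat{d} \le d+1$, so that substituting the bounds of Theorem~\ref{thm-gRMcode} yields $\frac{q^{h+1}-1}{q-1} \le \widehat{d} \le 2q^h$. No gaps; this is the intended argument made explicit.
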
 

The code $\widehat{\cGRM}(q,m,h)$ is the newly generalised Reed-Muller code. Our objective in this 
section is to prove that $\widehat{\cGRM}(q,m,h)$ is affine-invariant and demonstrates its 
application in combinatorics.    

\subsection{The code $\widehat{\cGRM}(q,m,h)$ is affine-invariant} 

\subsubsection{Automorphism groups of linear codes}

The set of coordinate permutations that map a code $\C$ to itself forms a group, which is referred to as 
the \emph{permutation automorphism group\index{permutation automorphism group of codes}} of $\C$
and denoted by $\PAut(\C)$. If $\C$ is a code of length $n$, then $\PAut(\C)$ is a subgroup of the 
\emph{symmetric group\index{symmetric group}} $\Sym_n$.

A \emph{monomial matrix\index{monomial matrix}} over $\gf(q)$ is a square matrix having exactly one 
nonzero element of $\gf(q)$  in each row and column. A monomial matrix $M$ can be written either in 
the form $DP$ or the form $PD_1$, where $D$ and $D_1$ are diagonal matrices and $P$ is a permutation 
matrix. 

The set of monomial matrices that map $\C$ to itself forms the group $\MAut(\C)$,  which is called the 
\emph{monomial automorphism group\index{monomial automorphism group}} of $\C$. Clearly, we have 
$$
\PAut(\C) \subseteq \MAut(\C).
$$

The \textit{automorphism group}\index{automorphism group} of $\C$, denoted by $\GAut(\C)$, is the set 
of maps of the form $M\gamma$, 
where $M$ is a monomial matrix and $\gamma$ is a field automorphism, that map $\C$ to itself. In the binary 
case, $\PAut(\C)$,  $\MAut(\C)$ and $\GAut(\C)$ are the same. If $q$ is a prime, $\MAut(\C)$ and 
$\GAut(\C)$ are identical. In general, we have 
$$ 
\PAut(\C) \subseteq \MAut(\C) \subseteq \GAut(\C). 
$$

By definition, every element in $\GAut(\C)$ is of the form $DP\gamma$, where $D$ is a diagonal matrix, 
$P$ is a permutation matrix, and $\gamma$ is an automorphism of $\gf(q)$.   
The automorphism group $\GAut(\C)$ is said to be $t$-transitive if for every pair of $t$-element ordered 
sets of coordinates, there is an element $DP\gamma$ of the automorphism group $\GAut(\C)$ such that its 
permutation part $P$ sends the first set to the second set. 

It is in general very difficult to determine the full automorphism group $\GAut(\C)$ of a linear code $\C$. 
However, for many applications it is sufficient to find a proper subgroup of the $\GAut(\C)$. We will do 
this for the code $\widehat{\cGRM}(q,m,h)$ subsequently. 

\subsubsection{Affine-invariant linear codes} 

In this section, we first give a special representation of primitive cyclic codes and their extended codes, 
and then define and characterise affine-invariant codes. We will skip proof details, but 
refer the reader to \cite[Section 4.7]{HP03} for a detailed proof of the major results presented in this section. 

A cyclic code of length $n=q^m-1$ over $\gf(q)$ for some positive integer $m$ is called a \emph{primitive 
cyclic code}. Let $\cR_n$ denote the quotient ring $\gf(q)[x]/(x^n-1)$. Any cyclic code $\C$ of length 
$n=q^m-1$ over $\gf(q)$ is an ideal of $\cR_n$, and is generated by a monic polynomial $g(x)$ of the 
least degree over $\gf(q)$. This polynomial is called the generator polynomial of the cyclic code $\C$, and 
can be expressed as 
$$ 
g(x)=\prod_{t \in T} (x-\alpha^t), 
$$ 
where $\alpha$ is a generator of $\gf(q^m)^*$, $T$ is a subset of $\cN=\{0,1, \cdots, n-1\}$ and a union 
of some $q$-cyclotomic cosets modulo $n$. 
The set $T$ is called a \emph{defining set} of $\C$ with respect to $\alpha$. 
When $\C$ is viewed as a subset of $\cR_n$, 
every codeword of $\C$ is a polynomial 
$c(x)=\sum_{i=0}^{n-1} c_i x^i$, where all 
$c_i \in \gf(q)$. 
A primitive cyclic code $\C$ is called \emph{even-like} if $1$ is a zero of its generator polynomial, 
and \emph{odd-like} otherwise. 

Let $\cJ$ and $\cJ^*$ denote $\gf(q^m)$ and $\gf(q^m)^*$, respectively. Let $\alpha$ be a primitive element 
of $\gf(q^m)$. The set $\cJ$ will be the index set of the extended cyclic codes of length $q^m$, and the set 
$\cJ^*$ will be the index set of the cyclic codes of length $n$. Let $X$ be an indeterminate. Define 
\begin{eqnarray}
\gf(q)[\cJ]=\left\{a=\sum_{g \in \cJ} a_g X^g: a_g \in \gf(q) \mbox{ for all } g \in \cJ \right\}. 
\end{eqnarray}    
The set $\gf(q)[\cJ]$ is an algebra under the following operations 
\begin{eqnarray*}
u \sum_{g \in \cJ} a_g X^g + v \sum_{g \in \cJ} b_g X^g = \sum_{g \in \cJ} (ua_g +v b_g) X^g  
\end{eqnarray*} 
for all $u, \, v \in \gf(q)$, and 
\begin{eqnarray}
\left(\sum_{g \in \cJ} a_g X^g \right) \left(\sum_{g \in \cJ} b_g X^g \right) 
= \sum_{g \in \cJ} \left(\sum_{h \in \cJ} a_h b_{g-h} \right) X^g. 
\end{eqnarray} 
The zero and unit of $\gf(q)[\cJ]$ are $\sum_{g \in \cJ} 0 X^g$ and $X^0$, respectively. 

Similarly, let 
\begin{eqnarray}
\gf(q)[\cJ^*]=\left\{a=\sum_{g \in \cJ^*} a_g X^g: a_g \in \gf(q) \mbox{ for all } g \in \cJ^* \right\}. 
\end{eqnarray}   
The set $\gf(q)[\cJ^*]$ is not a subalgebra, but a subspace of $\gf(q)[\cJ]$. Obviously, the elements of 
$\gf(q)[\cJ^*]$ are of the form 
$$ 
\sum_{i=0}^{n-1} a_{\alpha^i} X^{\alpha^i}, 
$$  
and those of 
$\gf(q)[\cJ]$ are of the form 
$$ 
a_0X^0 + \sum_{i=0}^{n-1} a_{\alpha^i} X^{\alpha^i}.  
$$  
Subsets of the subspace $\gf(q)[\cJ^*]$ will be used to characterise primitive cyclic codes over $\gf(q)$ and 
those of the algebra $\gf(q)[\cJ]$ will be employed to characterise extended primitive cyclic codes over $\gf(q)$. 

We define a one-to-one correspondence between $\cR_n$ and $\gf(q)[\cJ^*]$ by 
\begin{eqnarray}\label{eqn-Upsilon}
\Upsilon: c(x)=\sum_{i=0}^{n-1} c_i x^i \to C(X)=\sum_{i=0}^{n-1} C_{\alpha^i} X^{\alpha^i},
\end{eqnarray} 
where $C_{\alpha^i}=c_i$ for all $i$. 

The following theorem is obviously true. 

\begin{theorem}\label{thm-newCharacterisationCyclicCodes}
$\C \subseteq \cR_n$ has the circulant cyclic shift property if and only if $\Upsilon(\C) \subseteq \gf(q)[\cJ^*]$ has the property 
that 
$$ 
\sum_{i=0}^{n-1} C_{\alpha^i} X^{\alpha^i} = \sum_{g \in \cJ^*} C_g X^g \in \Upsilon(\C)  
$$ 
if and only if 
$$ 
\sum_{i=0}^{n-1} C_{\alpha^i} X^{\alpha \alpha^i} = \sum_{g \in \cJ^*} C_g X^{\alpha g} \in \Upsilon(\C)  
$$ 
\end{theorem}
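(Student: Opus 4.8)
The plan is to make explicit the operator on $\gf(q)[\cJ^*]$ that corresponds, through the isomorphism $\Upsilon$, to the cyclic shift on $\cR_n$, and then simply read off the claimed equivalence. First I would record that $\Upsilon$ is a $\gf(q)$-linear bijection between $\cR_n$ and $\gf(q)[\cJ^*]$: it relabels the coordinate indexed by $x^i$ as the coordinate indexed by $X^{\alpha^i}$, and since $i \mapsto \alpha^i$ is a bijection from $\{0,1,\ldots,n-1\}$ onto $\cJ^* = \gf(q^m)^*$, nothing is lost. The circulant cyclic shift property of $\C \subseteq \cR_n$ is, by definition, invariance of $\C$ under the shift $\sigma\colon c(x) \mapsto x\,c(x) \bmod (x^n-1)$.

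The central step is the computation identifying $\Upsilon \circ \sigma$ with the exponent-scaling map $T\colon \sum_{g \in \cJ^*} C_g X^g \mapsto \sum_{g \in \cJ^*} C_g X^{\alpha g}$. Writing $c(x) = \sum_{i=0}^{n-1} c_i x^i$, multiplication by $x$ modulo $x^n-1$ sends the coefficient of $x^i$ to $c_{i-1}$ (indices read modulo $n$). Applying $\Upsilon$ and reindexing via $j = i-1$ gives
\[
\Upsilon(x\,c(x)) = \sum_{i=0}^{n-1} c_{i-1} X^{\alpha^i} = \sum_{j=0}^{n-1} c_j X^{\alpha^{j+1}} = \sum_{j=0}^{n-1} c_j X^{\alpha \cdot \alpha^j} = \sum_{g \in \cJ^*} C_g X^{\alpha g},
\]
which is exactly $T(\Upsilon(c(x)))$. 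Thus $\sigma$ on $\cR_n$ and $T$ on $\gf(q)[\cJ^*]$ are conjugate under $\Upsilon$, and $\C$ is $\sigma$-invariant if and only if $\Upsilon(\C)$ is $T$-invariant. Since the property stated for $\Upsilon(\C)$ is precisely $T$-invariance, this already accounts for the outer ``if and only if'' of the theorem.

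Finally I would convert ``invariance'' into the two-way implication displayed inside the statement. Both $\sigma$ and $T$ are bijections of finite order $n$ (because $\alpha^n = 1$), so for the linear code $\C$ the inclusion $\sigma(\C) \subseteq \C$ forces $\sigma(\C) = \C$, whence $c(x) \in \C \iff x\,c(x) \in \C$, and likewise for $T$ on $\Upsilon(\C)$. This yields exactly the biconditional in the statement. There is no genuine obstacle here, as the author indicates: the entire content is the single reindexing identity above. The only points demanding care are keeping the additive (mod $n$) bookkeeping of the exponents $i$ separate from the multiplicative action $g \mapsto \alpha g$ on $\cJ^*$, and invoking the finite-order argument so that one-directional shift-invariance upgrades to the stated equivalence.
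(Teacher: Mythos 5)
Your proposal is correct: the conjugation identity $\Upsilon(x\,c(x)) = \sum_{g \in \cJ^*} C_g X^{\alpha g}$, together with the finite-order (or cardinality) argument upgrading $\sigma(\C) \subseteq \C$ to equality, is exactly the verification needed. The paper gives no proof at all -- it declares the theorem ``obviously true'' -- and your computation is precisely the routine reindexing that the authors leave implicit, so the approaches coincide.
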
 

With Theorem \ref{thm-newCharacterisationCyclicCodes}, every primitive cyclic code over $\gf(q)$ can be  
viewed as a special 
subset of $\gf(q)[\cJ^*]$ having the property documented in this theorem. This special representation of 
primitive cyclic codes over $\gf(q)$ will be very useful for determining a subgroup of the automorphism group 
of certain primitive cyclic codes.  

It is now time to extend primitive cyclic codes, which are subsets of $\gf(q)[\cJ^*]$. We use the element $0 
\in \cJ$ to index the extended coordinate. 
The extended codeword 
$\overline{C}(X)$ of a codeword $C(X)=\sum_{g \in \cJ^*} C_g X^g$ in $\gf(q)[\cJ^*]$ is defined by 
\begin{eqnarray}
\overline{C}(X)=\sum_{g \in \cJ} C_g X^g
\end{eqnarray} 
with $\sum_{g \in \cJ} C_g=0.$

Notice that $X^{\alpha 0}=X^0=1$. The following then follows from Theorem \ref{thm-newCharacterisationCyclicCodes}. 

\begin{theorem}\label{thm-newCharacterisationExtendCyclicCodes}
The extended code $\overline{\C}$ of a cyclic code $\C \subseteq \gf(q)[\cJ^*]$ is a subspace of 
$\gf(q)[\cJ]$ such that 
\begin{eqnarray*}
 \overline{C}(X)=\sum_{g \in \cJ} C_g X^g \in \overline{\C} \mbox{ if and only if } 
 \sum_{g \in \cJ} C_g X^{\alpha g} \in \overline{\C} \mbox{ and } 
\sum_{g \in \cJ} C_g=0. 
\end{eqnarray*} 
\end{theorem}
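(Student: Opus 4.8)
The plan is to deduce everything from Theorem~\ref{thm-newCharacterisationCyclicCodes} together with the single arithmetic fact that multiplication by $\alpha$ is a bijection of $\cJ=\gf(q^m)$ that fixes $0$ and permutes $\cJ^*=\gf(q^m)^*$. I would introduce the linear map $T$ on $\gf(q)[\cJ]$ sending $\sum_{g \in \cJ} C_g X^g$ to $\sum_{g \in \cJ} C_g X^{\alpha g}$. Since $g \mapsto \alpha g$ merely reindexes the coordinates, $T$ is a coordinate permutation; in particular it preserves the coefficient sum $\sum_{g \in \cJ} C_g$, and because $\alpha \cdot 0 = 0$ it fixes the extended coordinate $X^0$, so the coefficient of $X^0$ in $T(\overline{C})$ is still $C_0$ and the $\cJ^*$-part of $T(\overline{C})$ is exactly $\sum_{g \in \cJ^*} C_g X^{\alpha g}$. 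These observations are what make the argument work, and they encapsulate the remark $X^{\alpha 0}=X^0=1$ stated just before the theorem.

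First I would record that $\overline{\C}$ is a subspace: the extension map that sends $C(X)=\sum_{g \in \cJ^*} C_g X^g \in \C$ to $\overline{C}(X)=\sum_{g \in \cJ} C_g X^g$ with $C_0 = -\sum_{g \in \cJ^*} C_g$ is $\gf(q)$-linear, so the image of the subspace $\C$ is again a subspace of $\gf(q)[\cJ]$.

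For the forward implication, suppose $\overline{C}(X) \in \overline{\C}$. By the definition of the extension, $\sum_{g \in \cJ} C_g = 0$ and the truncation $\sum_{g \in \cJ^*} C_g X^g$ lies in $\C$. Theorem~\ref{thm-newCharacterisationCyclicCodes} then yields $\sum_{g \in \cJ^*} C_g X^{\alpha g} \in \C$, and by the bookkeeping above this is precisely the $\cJ^*$-part of $T(\overline{C})$, whose coefficient sum is still $0$; hence $T(\overline{C})=\sum_{g \in \cJ} C_g X^{\alpha g} \in \overline{\C}$. Conversely, assume $\sum_{g \in \cJ} C_g X^{\alpha g} \in \overline{\C}$ and $\sum_{g \in \cJ} C_g = 0$. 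Restricting to $\cJ^*$ gives $\sum_{g \in \cJ^*} C_g X^{\alpha g} \in \C$, and applying the other direction of Theorem~\ref{thm-newCharacterisationCyclicCodes} returns $\sum_{g \in \cJ^*} C_g X^g \in \C$; combined with $\sum_{g \in \cJ} C_g = 0$ this is exactly the condition for $\overline{C}(X)$ to be an extended codeword, so $\overline{C}(X) \in \overline{\C}$.

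The only place requiring care is verifying that taking the $\cJ^*$-restriction commutes with $T$ and leaves the defining constraint intact; this is the step I expect to be the main subtlety, though it is bookkeeping rather than a genuine difficulty. It hinges entirely on $\alpha \cdot 0 = 0$ keeping the extended coordinate fixed, so that neither the cyclic-shift characterisation of $\C$ nor the overall condition $\sum_{g \in \cJ} C_g = 0$ is disturbed by the reindexing. Once this is nailed down, both implications follow immediately from the biconditional in Theorem~\ref{thm-newCharacterisationCyclicCodes}.
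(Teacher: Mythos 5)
Your proposal is correct and follows the same route the paper takes: the paper proves this theorem simply by noting that $X^{\alpha 0}=X^0=1$ (i.e., multiplication by $\alpha$ fixes the extended coordinate) and invoking Theorem~\ref{thm-newCharacterisationCyclicCodes}, which is precisely the observation your map $T$ and the bookkeeping around it make explicit. Your write-up just fills in the details the paper leaves to the reader (and to \cite[Section 4.7]{HP03}).
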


If a cyclic code $\C$ is viewed as an ideal of $\cR_n=\gf(q)[x]/(x^n-1)$, it can be defined by its set of zeros 
or its defining set. When $\C$ and $\overline{\C}$ are put in the 
settings $\gf(q)[\cJ^*]$ and $\gf(q)[\cJ]$, respectively, they can be defined with some counterpart of the defining set. 
This can be done with the assistance of the following function $\phi_s$ from $\gf(q)[\cJ]$ to $\cJ$: 
\begin{eqnarray}
\phi_s\left(\sum_{g \in \cJ} C_g X^g\right)= \sum_{g \in \cJ} C_g g^s, 
\end{eqnarray}    
where $s \in \overline{\cN}:=\{i: 0 \leq i \leq n\}$ and by convention $0^0=1$ in $\cJ$. 

The following follows from Theorem \ref{thm-newCharacterisationExtendCyclicCodes} and the definition of 
$\phi_s$ directly. 

\begin{lemma}\label{lem-dec101}  
$\overline{C}(X)$ is the extended codeword of $C(X) \in \gf(q)[\cJ^*]$ if and only if $\phi_0(\overline{C}(X))=0$. 
In particular, if $\overline{\C}$ is the extended code of a primitive cyclic code $\C \subseteq \gf(q)[\cJ^*]$, 
then $\phi_0(\overline{C}(X))=0$ for all $\overline{C}(X) \in \overline{\C}$. 
\end{lemma}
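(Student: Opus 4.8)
The plan is to reduce both assertions to a single direct evaluation of $\phi_0$ on an arbitrary element of $\gf(q)[\cJ]$. First I would simply unwind the definition of $\phi_0$ together with the stated convention $0^0=1$ in $\cJ$. For any $\overline{C}(X)=\sum_{g \in \cJ} C_g X^g$, the defining formula gives $\phi_0(\overline{C}(X))=\sum_{g \in \cJ} C_g\, g^0$; since $g^0=1$ for every $g \in \cJ^*=\gf(q^m)^*$ and also $0^0=1$ by convention, this collapses to
\[
\phi_0(\overline{C}(X))=\sum_{g \in \cJ} C_g .
\]
In other words, $\phi_0$ is nothing but the sum of all coefficients of $\overline{C}(X)$, crucially including the coefficient $C_0$ attached to the extended coordinate $X^0$.

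With this identity in hand, the biconditional follows by comparison with the definition of the extended codeword. Recall that $\overline{C}(X)$ is the extension of some $C(X)=\sum_{g \in \cJ^*} C_g X^g \in \gf(q)[\cJ^*]$ precisely when $\sum_{g \in \cJ} C_g=0$: in that case $C(X)$ is recovered from $\overline{C}(X)$ by deleting the $X^0$ term, and conversely any $\overline{C}(X)\in\gf(q)[\cJ]$ satisfying $\sum_{g \in \cJ} C_g=0$ is the extension of the $C(X)$ obtained this way. Since the defining condition $\sum_{g \in \cJ} C_g=0$ is, by the computation above, literally the statement $\phi_0(\overline{C}(X))=0$, the two conditions are equivalent, which is the first claim.

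The \emph{in particular} statement is then immediate: if $\overline{\C}$ is the extended code of a primitive cyclic code $\C \subseteq \gf(q)[\cJ^*]$, then every $\overline{C}(X)\in\overline{\C}$ is by construction the extended codeword of a codeword of $\C$, so the biconditional just established forces $\phi_0(\overline{C}(X))=0$. I do not anticipate any real obstacle here; the only point demanding care is the convention $0^0=1$, which is exactly what guarantees that the extended coordinate $C_0$ is counted in $\phi_0$, so that $\phi_0$ agrees with the full coefficient sum over $\cJ$ rather than merely the sum over $\cJ^*$.
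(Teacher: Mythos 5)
Your proof is correct and follows the same route the paper intends: the paper gives no detailed argument, merely noting that the lemma ``follows from Theorem \ref{thm-newCharacterisationExtendCyclicCodes} and the definition of $\phi_s$ directly,'' and your computation $\phi_0(\overline{C}(X))=\sum_{g\in\cJ}C_g$ (using the convention $0^0=1$) combined with the coefficient-sum condition $\sum_{g\in\cJ}C_g=0$ defining the extension is exactly that direct verification, spelled out. Nothing is missing.
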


\begin{lemma}\label{lem-dec102} 
Let $\C$ be a primitive cyclic code of length $n$ over $\gf(q)$. Let $T$ be the defining set of $\C$ with 
respect to $\alpha$, when it is viewed as an ideal of $\cR_n$. Let $s \in T$ and $1 \leq s \leq n-1$.  
We have 
then $\phi_s(\overline{C}(X))=0$ for all $\overline{C}(X) \in \overline{\C}$. 
\end{lemma}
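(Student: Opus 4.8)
The plan is to show that the linear functional $\phi_s$ on extended codewords coincides with the ordinary evaluation $c(\alpha^s)$ of the underlying codeword polynomial, and then to invoke the defining property of $T$. First I would expand $\phi_s(\overline{C}(X)) = \sum_{g \in \cJ} C_g g^s$ and split the index set $\cJ = \gf(q^m)$ into the zero element and $\cJ^* = \{\alpha^i : 0 \le i \le n-1\}$. The term indexed by $0 \in \cJ$ contributes $C_0 \, 0^s$; since $1 \le s \le n-1$ we have $0^s = 0$, so this term vanishes. Note that the convention $0^0 = 1$ is \emph{not} invoked here precisely because the hypothesis excludes $s = 0$; this is exactly where the restriction $1 \le s \le n-1$ is used. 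Thus $\phi_s(\overline{C}(X))$ depends only on the non-extended coordinates.

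Next I would substitute the correspondence $C_{\alpha^i} = c_i$ coming from $\Upsilon$ in (\ref{eqn-Upsilon}) and write $g^s = (\alpha^i)^s = (\alpha^s)^i$, obtaining
\[
\phi_s(\overline{C}(X)) = \sum_{i=0}^{n-1} c_i (\alpha^s)^i = c(\alpha^s),
\]
where $c(x) = \sum_{i=0}^{n-1} c_i x^i$ is the codeword polynomial of $\cR_n$ corresponding to $\overline{C}(X)$. Finally, since $s \in T$, the element $\alpha^s$ is a zero of the generator polynomial $g(x) = \prod_{t \in T}(x - \alpha^t)$ and hence a zero of every codeword polynomial $c(x) \in \C$. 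Therefore $c(\alpha^s) = 0$, which yields $\phi_s(\overline{C}(X)) = 0$ for all $\overline{C}(X) \in \overline{\C}$.

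There is no genuine obstacle here: the argument is a short direct computation, and the result is essentially a reinterpretation of the statement ``$\alpha^s$ is a root of each codeword'' in the $\gf(q)[\cJ]$ setting. The only point requiring care is the bookkeeping between the $\cJ$-indexing used to define $\phi_s$ and the ordinary monomial indexing of $c(x)$, together with the observation that the extended coordinate indexed by $0$ drops out exactly because we restrict to $1 \le s \le n-1$. This is also why Lemma~\ref{lem-dec101} must be treated separately for the case $s = 0$, where the convention $0^0 = 1$ does play a role.
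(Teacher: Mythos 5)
Your proof is correct. Note that the paper itself gives no argument for this lemma: it explicitly skips the proofs in this section and refers the reader to Section 4.7 of Huffman--Pless \cite{HP03}, so there is nothing in the paper to diverge from. Your computation is exactly the standard one: the extended coordinate drops out of $\phi_s(\overline{C}(X))$ because $0^s=0$ for $1 \le s \le n-1$ (correctly isolating why the convention $0^0=1$ and the parity condition $\sum_{g\in\cJ}C_g=0$ are needed only for the $s=0$ case of Lemma~\ref{lem-dec101}), and the remaining sum collapses via $\Upsilon$ to $\sum_{i=0}^{n-1} c_i(\alpha^s)^i = c(\alpha^s)$, which vanishes since $\alpha^s$ is a root of the generator polynomial and $(\alpha^s)^n=1$ makes evaluation well defined on $\cR_n$. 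This fills the gap the paper leaves cleanly.
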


\begin{lemma}\label{lem-dec103}  
Let $\C$ be a primitive cyclic code of length $n$ over $\gf(q)$. Let $T$ be the defining set of $\C$ with 
respect to $\alpha$, when it is viewed as an ideal of $\cR_n$. Then $0 \in T$ if and only if   
$\phi_n(\overline{C}(X))=0$ for all $\overline{C}(X) \in \overline{\C}$. 
\end{lemma}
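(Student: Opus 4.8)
The plan is to prove the biconditional in Lemma~\ref{lem-dec103} by unwinding the definition of $\phi_n$ and relating it to membership of the zero-index $0$ in the defining set $T$. Recall from the setup that $\phi_s(\overline{C}(X)) = \sum_{g \in \cJ} C_g g^s$, with the convention $0^0 = 1$. The key observation is that for the extended index $s = n = q^m - 1$, the power $g^n$ behaves very differently on the nonzero elements of $\cJ^*$ than it does on the extended coordinate $g = 0$. First I would note that for every $g \in \cJ^* = \gf(q^m)^*$, we have $g^n = g^{q^m - 1} = 1$, so the restriction of the sum to $\cJ^*$ collapses to $\sum_{g \in \cJ^*} C_g$. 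The single remaining term is the one indexed by $g = 0$, which contributes $C_0 \cdot 0^n = 0$ since $0^n = 0$ (note $n \geq 1$, so the convention $0^0 = 1$ does not intervene here). Hence $\phi_n(\overline{C}(X)) = \sum_{g \in \cJ^*} C_g$.

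Next I would bring in Lemma~\ref{lem-dec101}, which states that $\phi_0(\overline{C}(X)) = 0$ for every extended codeword, i.e.\ $\sum_{g \in \cJ} C_g = 0$. Splitting off the extended coordinate gives $C_0 + \sum_{g \in \cJ^*} C_g = 0$, so $\sum_{g \in \cJ^*} C_g = -C_0$. Combining with the previous step yields the clean identity $\phi_n(\overline{C}(X)) = -C_0$ for all $\overline{C}(X) \in \overline{\C}$. This reduces the lemma to showing that $0 \in T$ if and only if $C_0 = 0$ for every codeword $C(X) \in \C$ (before extension). But $C_0$ here is the coefficient $c_0$ of the underlying polynomial $c(x) \in \cR_n$ under the correspondence $\Upsilon$, and the condition ``$c_0 = 0$ for all codewords'' is not quite the right characterisation — rather I should interpret $\phi_n$ as evaluating the codeword at the root $\alpha^n = \alpha^{q^m-1} = 1$, i.e.\ $\phi_n(\overline{C}(X))$ is essentially $c(1) = \sum_i c_i$ up to the extended-coordinate correction.

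More precisely, I would recast the argument in terms of evaluation. Since $\sum_{g \in \cJ^*} C_g = \sum_{i=0}^{n-1} c_i = c(1)$, we have $\phi_n(\overline{C}(X)) = c(1)$, the evaluation of the polynomial $c(x)$ at $x = 1 = \alpha^0 = \beta^0$. Now $\alpha^0 = 1$ is a zero of the generator polynomial $g(x) = \prod_{t \in T}(x - \alpha^t)$ precisely when $0 \in T$, and every codeword $c(x)$ is a multiple of $g(x)$ in $\cR_n$. Therefore, if $0 \in T$, then $(x - 1) \mid g(x) \mid c(x)$, so $c(1) = 0$ for every codeword, giving $\phi_n(\overline{C}(X)) = 0$ throughout $\overline{\C}$. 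Conversely, if $0 \notin T$, then $1$ is not a zero of $g(x)$, and taking $c(x) = g(x)$ itself (the generator codeword) produces a codeword with $c(1) = g(1) \neq 0$, so $\phi_n$ does not vanish identically; this establishes the contrapositive of the reverse direction.

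I expect the main subtlety — though not a deep obstacle — to lie in the careful bookkeeping of the extended coordinate and the two competing conventions $0^0 = 1$ versus $0^n = 0$, since it is exactly this asymmetry between $\phi_0$ and $\phi_n$ that makes the lemma nontrivial and distinguishes it from Lemma~\ref{lem-dec101}. One must verify that the term indexed by $g = 0$ genuinely drops out of $\phi_n$ (because $n \geq 1$) while it is retained in $\phi_0$, and then use $\phi_0 \equiv 0$ to convert the ``partial'' sum over $\cJ^*$ into the full evaluation $c(1)$. Once the identity $\phi_n(\overline{C}(X)) = c(1)$ is in hand, the equivalence with $0 \in T$ is an immediate consequence of the standard fact that a cyclic code vanishes at $\alpha^t$ for all codewords if and only if $t \in T$, applied to $t = 0$.
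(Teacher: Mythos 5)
Your proof is correct. Note that the paper itself gives no proof of Lemma~\ref{lem-dec103} (it defers all proofs in this subsection to Section 4.7 of \cite{HP03}), so there is no in-paper argument to compare against; your write-up supplies the standard argument that the cited reference uses. The core of your proof is sound: for $g \in \cJ^*$ we have $g^n = 1$ while the extended coordinate contributes $C_0 \cdot 0^n = 0$ (the convention $0^0=1$ being irrelevant since $n \geq 1$), so $\phi_n(\overline{C}(X)) = \sum_{g \in \cJ^*} C_g = \sum_{i=0}^{n-1} c_i = c(1)$, and then $0 \in T$ is equivalent to $(x-1) \mid g_{\C}(x)$, which is equivalent to $c(1)=0$ for all codewords, with the generator polynomial itself witnessing the converse when $0 \notin T$.

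One caution about an intermediate sentence: you write that ``$C_0$ here is the coefficient $c_0$ of the underlying polynomial $c(x)$ under the correspondence $\Upsilon$.'' That identification is false --- under $\Upsilon$ the coefficient $c_0$ sits at the index $\alpha^0 = 1 \in \cJ^*$, whereas $C_0$ is the coefficient at the field element $0 \in \cJ$, i.e.\ the extended (parity) coordinate, equal to $-\sum_{g\in\cJ^*}C_g$ by Lemma~\ref{lem-dec101}. Fortunately you flag this line of reasoning as ``not quite right'' and abandon it in favour of the evaluation identity $\phi_n(\overline{C}(X)) = c(1)$, which is what your final argument actually rests on, so the slip does not propagate. (Incidentally, the discarded route could also have been completed: $\phi_n \equiv 0$ iff $C_0 = 0$ for all extended codewords, i.e.\ iff the extension is trivial, which happens exactly when $\C$ is even-like, i.e.\ $0 \in T$ --- this is the remark the paper makes when discussing $n \in \overline{T}$. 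Your evaluation route is the cleaner of the two.)
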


Combining Lemmas \ref{lem-dec101}, \ref{lem-dec102}, \ref{lem-dec103} and the discussions above, 
we can define an extended cyclic code in terms of a defining set as follows. 

A code $\overline{\C}$ of length $q^m$ is an \emph{extended primitive cyclic code}\index{extended primitive cyclic code} 
with definition set $\overline{T}$ provided $\overline{T} \setminus \{n\} \subseteq \overline{\cN}$ is a union of 
$q$-cyclotomic cosets modulo $n=q^m-1$ with $0 \in \overline{T}$ and 
\begin{eqnarray}\label{eqn-defdefCodes}
\overline{\C}= \left\{\overline{C}(X) \in \gf(q)[\cJ]: \phi_s(\overline{C}(X))=0 \mbox{ for all } s \in \overline{T} \right\}. 
\end{eqnarray} 

The following remarks are helpful for fully understanding the characterisation of extended primitive 
cyclic codes: 
\begin{itemize} 
\item The condition that $\overline{T} \setminus \{n\} \subseteq \overline{\cN}$ is a union of 
$q$-cyclotomic cosets modulo $n=q^m-1$ is to ensure that the code $\C$ obtained by puncturing the 
first coordinate of $\overline{\C}$ and ordering the elements of $\cJ$ with $(0, \alpha^n, \alpha^1, \cdots, \alpha^{n-1})$ is a primitive cyclic code. 
\item The additional requirement $0 \in \overline{T}$ and (\ref{eqn-defdefCodes}) are to make sure 
that $\overline{\C}$ is the extended code of $\C$. 
\item If $n \in \overline{T}$, then $\C$ is an even-like code. In this case, the extension is trivial, 
i.e., the extended coordinate in every codeword of $\overline{\C}$ is always equal to $0$. 
If $n \not\in \overline{T}$, then $0 \not\in T$. Thus, the extension is nontrivial. 
\item If $\overline{\C}$ is the extended code of a primitive cyclic code $\C$, then 
\begin{eqnarray*}
\overline{T}=\left\{ 
\begin{array}{ll}
\{0\} \cup T & \mbox{ if } 0 \not\in T, \\
\{0, n\} \cup T & \mbox{ if } 0 \in T.  
\end{array}
\right.
\end{eqnarray*}
where $T$ and $\overline{T}$ are the defining sets of $\C$ and $\overline{\C}$, respectively.  
\item The following diagram illustrates the relations among the two codes and their definition sets: 
\begin{eqnarray*}
\begin{array}{lcr}
\C \subseteq \cR_n & \Longleftrightarrow \C \subseteq \gf(q)[\cJ^*] 
                      \Longrightarrow & \gf(q)[\cJ] \supseteq \overline{\C} \\
T \subseteq \cN     &      &    \overline{T} \subseteq \overline{\cN}                       
\end{array} 
\end{eqnarray*}
\end{itemize}   

Let $\sigma$ be a permutation on $\cJ$. This permutation acts on a code $\overline{\C} \subseteq \gf(q)[\cJ]$ as follows: 
\begin{eqnarray}
\sigma\left( \sum_{g \in \cJ} C_g X^g \right) = \sum_{g \in \cJ} C_g X^{\sigma(g)}.   
\end{eqnarray}

The \emph{affine permutation group}\index{affine permutation group}, denoted by $\AGL(1, q^m)$, is 
defined by 
\begin{eqnarray}\label{eqn-AGL(1qm)}
\AGL(1, q^m)=\{\sigma_{(a,b)}(y)=ay+b: a \in \cJ^*, \, b \in \cJ\}.  
\end{eqnarray} 
We have the following conclusions about $\AGL(1, q^m)$ whose proofs are straightforward: 
\begin{itemize} 
\item $\AGL(1, q^m)$ is a permutation group on $\cJ$ under the function composition. 
\item The group action of $\AGL(1, q^m)$ on $\gf(q^m)$ is doubly transitive, i.e., $2$-transitive. 
\item $\AGL(1, q^m)$ has order $(n+1)n=q^m(q^m-1)$. 
\item Obviously, the maps $\sigma_{(a,0)}$ are merely the cyclic shifts on the coordinates 
$(\alpha^n, \alpha^1, \cdots, \alpha^{n-1})$ each fixing the coordinate $0$. 
\end{itemize} 

An \emph{affine-invariant code}\index{affine-invariant} is an extended primitive cyclic code 
$\overline{\C}$ such that $\AGL(1, q^m) \subseteq \PAut(\overline{\C})$. For certain applications, 
it is important to know if a given extended primitive cyclic code $\overline{\C}$ is affine-invariant or not. 
This question can be answered by examining the defining set of the code. In order to do this, we introduce a partial 
ordering $\preceq$ on $\overline{\cN}$. Suppose that $q=p^t$ for some positive integer $t$. Then by 
definition $\overline{\cN}=\{0,1,2, \cdots, n\}$, where $n=q^m-1=p^{mt}-1$. The $p$-adic expansion of 
each $s \in \overline{\cN}$ is given by 
\begin{eqnarray*}
s=\sum_{i=0}^{mt-1} s_i p^i, \, \mbox{ where } 0 \leq s_i <p \mbox{ for all } 0 \leq i \leq mt-1.   
\end{eqnarray*}    
Let the $p$-adic expansion of $r \in \overline{\cN}$ be 
$$ 
r=\sum_{i=0}^{mt-1} r_i p^i. 
$$  
We say that $r \preceq s$ if $r_i \leq s_i$ for all $0 \leq i \leq mt-1$. By definition, we have 
$r\leq s$ if $r \preceq s$.

The following is a characterisation of affine-invariant codes due to Kasami, Lin and Peterson \cite{KLP68}. 

\begin{theorem}[Kasami-Lin-Peterson]\label{thm-KLPtheorem}
Let $\overline{\C}$ be an extended cyclic code of length $q^m$ over $\gf(q)$ with defining set $\overline{T}$. 
The code $\overline{\C}$ is affine-invariant if and only if whenever $s \in \overline{T}$ then $r \in \overline{T}$ for all 
$r \in \overline{\cN}$ with $r \preceq s$.    
\end{theorem}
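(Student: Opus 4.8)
The plan is to translate the assertion into a condition on the power-sum functionals $\phi_s$ and then to exploit Lucas' theorem together with the decomposition of $\AGL(1,q^m)$ into multiplications $\sigma_{(a,0)}$ and translations $\sigma_{(1,b)}$. Since $\overline{\C}$ is an extended primitive cyclic code, the multiplications $\sigma_{(a,0)}$ are precisely the cyclic shifts fixing the coordinate indexed by $0$, so they already belong to $\PAut(\overline{\C})$ by Theorem~\ref{thm-newCharacterisationExtendCyclicCodes}. Because $\sigma_{(a,b)}=\sigma_{(1,b)}\circ\sigma_{(a,0)}$, the code is affine-invariant if and only if it is invariant under every translation $\sigma_{(1,b)}$. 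The whole problem thus reduces to tracking how $\phi_s$ transforms under a translation.

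First I would record the transformation rule: for $\overline{C}(X)=\sum_{g\in\cJ}C_gX^g$ and $\sigma_{(a,b)}(y)=ay+b$, expanding $(ag+b)^s$ by the binomial theorem gives
\[
\phi_s\bigl(\sigma_{(a,b)}(\overline{C})\bigr)=\sum_{g\in\cJ}C_g(ag+b)^s=\sum_{r=0}^{s}\binom{s}{r}a^rb^{s-r}\phi_r(\overline{C}).
\]
Now Lucas' theorem enters: in characteristic $p$ one has $\binom{s}{r}\neq 0$ precisely when $r\preceq s$, so the sum effectively ranges over $r\preceq s$ only. For the sufficiency direction, assume that $s\in\overline{T}$ forces every $r\preceq s$ into $\overline{T}$. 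Putting $a=1$, for $s\in\overline{T}$ and any $\overline{C}\in\overline{\C}$ each surviving term $\phi_r(\overline{C})$ with $r\preceq s$ vanishes, since $r\in\overline{T}$; hence $\phi_s(\sigma_{(1,b)}(\overline{C}))=0$ for all $s\in\overline{T}$, so $\sigma_{(1,b)}(\overline{C})\in\overline{\C}$ and $\overline{\C}$ is affine-invariant.

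For the necessity direction, assume $\overline{\C}$ is affine-invariant and fix $s\in\overline{T}$. Then $\phi_s(\sigma_{(1,b)}(\overline{C}))=0$ for every $\overline{C}\in\overline{\C}$ and every $b\in\cJ$, that is,
\[
\sum_{r\preceq s}\binom{s}{r}\phi_r(\overline{C})\,b^{s-r}=0\qquad\text{for all }b\in\gf(q^m).
\]
The left-hand side is a polynomial in $b$ whose exponents $s-r$ are pairwise distinct and at most $s\leq n=q^m-1<q^m$; vanishing at all $q^m$ points of $\gf(q^m)$ forces it to be the zero polynomial, so $\binom{s}{r}\phi_r(\overline{C})=0$ for each $r\preceq s$. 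Since $\binom{s}{r}\neq 0$ for $r\preceq s$, we conclude $\phi_r(\overline{C})=0$ for all $\overline{C}\in\overline{\C}$, and therefore $r\in\overline{T}$. This establishes the implication $r\preceq s\Rightarrow r\in\overline{T}$, completing the equivalence.

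The step I expect to be the crux is the very last inference of the necessity half: passing from ``$\phi_r$ annihilates every codeword of $\overline{\C}$'' to ``$r\in\overline{T}$''. This requires knowing that the defining set $\overline{T}$ is \emph{complete}, i.e. that it already contains every index whose functional vanishes identically on the code. This completeness is what makes the family $\{\phi_r:r\in\overline{\cN}\}$ a faithful description of $\overline{\C}$, and it is the one ingredient I would import from the structure theory built up around~\eqref{eqn-defdefCodes} rather than reprove here.
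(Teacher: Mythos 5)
The paper itself offers no proof of this theorem: it is quoted as the known Kasami--Lin--Peterson characterisation, with a citation to \cite{KLP68} and an explicit deferral of all details to \cite[Section 4.7]{HP03}. Your argument is correct, and it is essentially the standard proof found in that reference: reduce to translations via $\sigma_{(a,b)}=\sigma_{(1,b)}\circ\sigma_{(a,0)}$ (the multiplications being cyclic shifts already in $\PAut(\overline{\C})$), apply the binomial transformation rule for $\phi_s$, invoke Lucas' theorem to restrict to $r\preceq s$, and in the necessity direction use that a polynomial in $b$ of degree at most $n<q^m$ vanishing on all of $\gf(q^m)$ is identically zero. The one ingredient you import without proof --- completeness of $\overline{T}$ --- is indeed valid in the paper's framework, and is worth spelling out: for $1\le r\le n-1$ one has $\phi_r(\overline{C})=c(\alpha^r)$, where $c$ is the punctured codeword, and since the punctured code is cyclic with generator polynomial $\prod_{t\in T}(x-\alpha^t)$ (with $T$ a union of $q$-cyclotomic cosets), its common zeros among the $n$-th roots of unity are exactly $\{\alpha^t:t\in T\}$; the boundary cases are fixed by the paper's conventions, namely $0\in\overline{T}$ always, and $n\in\overline{T}$ precisely when the punctured code is even-like, which is precisely when $c(1)=0$ for all codewords. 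Hence $\overline{T}=\{r\in\overline{\cN}:\phi_r \mbox{ vanishes identically on } \overline{\C}\}$, so your final inference from ``$\phi_r$ annihilates $\overline{\C}$'' to ``$r\in\overline{T}$'' is sound, and the proof is complete.
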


Theorem \ref{thm-KLPtheorem} will be employed in the next section. It is a very useful tool to prove that 
an extended primitive cyclic code is affine-invariant.

\subsubsection{$\widehat{\cGRM}(q,m,h)$ is affine-invariant}

\begin{theorem}\label{thm-OmegaCodeAffineInvariant}
Let $q$ be a prime and $1 \leq h \leq m-1$. Then $\widehat{\cGRM}{(q,m, h)}$ is affine-invariant. 
\end{theorem}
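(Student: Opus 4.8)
The plan is to invoke the Kasami--Lin--Peterson criterion (Theorem \ref{thm-KLPtheorem}), which reduces the affine-invariance of an extended primitive cyclic code to a single combinatorial property of its defining set $\overline{T}$: that $\overline{T}$ be a down-set for the partial order $\preceq$. So the whole proof amounts to (i) computing $\overline{T}$ explicitly and (ii) checking that $s \in \overline{T}$ and $r \preceq s$ force $r \in \overline{T}$.

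First I would pin down $\overline{T}$. The code $\cGRM(q,m,h)$ has defining set $T = I(q,m,h) = \{1 \le a \le n-1 : 1 \le \wt(a) \le h\}$, and since $0 \notin T$ the code is odd-like; hence, by the characterisation of the defining set of an extended primitive cyclic code given just before Theorem \ref{thm-KLPtheorem}, we have $\overline{T} = \{0\} \cup I(q,m,h)$. Because $\wt(0)=0$ lies in $[0,h]$ while $\wt(n) = \wt(q^m-1) = m > h$ (this is where $h \le m-1$ is used, guaranteeing $n \notin \overline{T}$), this set is precisely $\overline{T} = \{a \in \overline{\cN} : \wt(a) \le h\}$, i.e. the elements of $\{0,1,\dots,n\}$ whose $q$-adic expansion has at most $h$ nonzero digits.

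Next I would use the hypothesis that $q=p$ is prime, so that $t=1$ and the $p$-adic expansion defining $\preceq$ coincides digit-for-digit with the $q$-adic expansion defining $\wt$; thus $r \preceq s$ means exactly $r_i \le s_i$ for every digit $i$. The decisive observation is then immediate: if $r \preceq s$ and $s_i = 0$, then $r_i = 0$, so the support of $r$ (its set of nonzero digit positions) is contained in that of $s$, and therefore $\wt(r) \le \wt(s)$. Consequently, whenever $s \in \overline{T}$ (so $\wt(s) \le h$) and $r \in \overline{\cN}$ satisfies $r \preceq s$, we get $\wt(r) \le h$, i.e. $r \in \overline{T}$. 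This is exactly the closure condition in Theorem \ref{thm-KLPtheorem}, so $\widehat{\cGRM}(q,m,h)$ is affine-invariant.

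The individual steps are routine, and I expect no serious obstacle; the only point genuinely requiring care is the correct identification of $\overline{T}$ — verifying that $0$ is adjoined by the extension and that $n$ is excluded because $\wt(n)=m>h$ — together with recognising that primality of $q$ is precisely what makes $\preceq$ monotone with respect to $\wt$ in the most transparent way. (In fact the support argument survives for composite $q$ as well, since a vanishing $q$-adic digit forces all the corresponding $p$-adic digits to vanish; restricting to prime $q$ simply lets the two expansions be identified outright.)
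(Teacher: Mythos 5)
Your proof is correct and follows essentially the same route as the paper's: identify the defining set $\overline{T}=\{0\}\cup I(q,m,h)$ of the extended code, and verify the Kasami--Lin--Peterson down-set condition by noting that $r \preceq s$ forces $\wt(r)\leq\wt(s)$ when $q$ is prime (so the $p$-adic and $q$-adic expansions coincide). Your closing parenthetical is also a valid bonus observation: since a zero $q$-adic digit of $s$ forces the corresponding block of $p$-adic digits of any $r \preceq s$ to vanish, the same argument in fact establishes the theorem for all prime powers $q$, a generality the paper's statement does not claim.
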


\begin{proof}
By (\ref{eqn-generatorplym}), the defining set $\overline{T}$ of $\widehat{\cGRM}{(q,m, h)}$ is 
given by 
$$ 
\overline{T}=\{0\} \cup \{1 \leq a \leq n-1: 1 \leq \wt(a) \leq h\}. 
$$
Recall that $\overline{\cN}=\{0,1, \cdots, n\}$. Let $s \in \overline{T}$ and $r \in \overline{\cN}$. Assume 
that $r \preceq s$. We need prove that $r \in \overline{T}$ by Theorem \ref{thm-KLPtheorem}. 

If $s=0$, then $r=0$. In this case, we have indeed $r  \in \overline{T}$. We now assume that 
$s>0$ and $r>0$. Since $r \preceq s$, we have 
$$ 
1 \leq \wt(r) \leq \wt(s) < h.  
$$ 
This means that $r \in \overline{T}$. 
The desired conclusion then follows.   
\end{proof} 

\begin{corollary}\label{cor-Omegacodes2Transitive} 
The automorphism group $\Aut(\widehat{\cGRM}{(q,m, h)})$ is doubly transitive. 
\end{corollary}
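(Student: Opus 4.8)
The plan is to deduce the corollary directly from the affine-invariance established in Theorem~\ref{thm-OmegaCodeAffineInvariant}, together with the double transitivity of the affine group $\AGL(1, q^m)$ noted earlier in this section. First I would recall that, by definition, an affine-invariant code $\overline{\C}$ satisfies $\AGL(1, q^m) \subseteq \PAut(\overline{\C})$. Applying this to $\overline{\C} = \widehat{\cGRM}{(q,m,h)}$, which Theorem~\ref{thm-OmegaCodeAffineInvariant} shows to be affine-invariant, immediately yields $\AGL(1, q^m) \subseteq \PAut(\widehat{\cGRM}{(q,m,h)})$.

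Next I would invoke the chain of inclusions $\PAut(\C) \subseteq \MAut(\C) \subseteq \GAut(\C)$ recorded above, so that $\AGL(1, q^m) \subseteq \Aut(\widehat{\cGRM}{(q,m,h)})$. The key observation is that each $\sigma \in \AGL(1, q^m)$, being a coordinate permutation, is an element $DP\gamma$ of $\Aut$ with $D$ the identity matrix and $\gamma$ the identity field automorphism; hence its permutation part $P$ is exactly $\sigma$ itself. Consequently the collection of permutation parts of elements of $\Aut(\widehat{\cGRM}{(q,m,h)})$ contains all of $\AGL(1, q^m)$.

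Finally I would match this against the definition of double transitivity of $\Aut$: given any two ordered pairs of distinct coordinates, I need an element of $\Aut$ whose permutation part carries the first pair to the second. Since $\AGL(1, q^m)$ acts $2$-transitively on $\cJ = \gf(q^m)$ (listed as one of the basic properties of $\AGL(1, q^m)$), there is an affine map $\sigma_{(a,b)}$ realizing this, and by the previous paragraph $\sigma_{(a,b)}$ is the permutation part of an element of $\Aut$. This verifies the definition and completes the argument.

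There is essentially no genuine obstacle here; the only point requiring care is the bookkeeping between two notions of ``$2$-transitive'': the group-theoretic double transitivity of $\AGL(1, q^m)$ acting on $\cJ$ versus the code-automorphism notion phrased in terms of permutation parts $P$ of maps $DP\gamma$. Reconciling these amounts to noting that pure coordinate permutations sit inside $\Aut$ with trivial diagonal and field-automorphism parts, which I would spell out explicitly to avoid any ambiguity.
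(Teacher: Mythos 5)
Your proposal is correct and follows essentially the same route as the paper: affine-invariance (Theorem~\ref{thm-OmegaCodeAffineInvariant}) places $\AGL(1,q^m)$ inside the automorphism group, and the $2$-transitivity of the $\AGL(1,q^m)$-action on $\gf(q^m)$, which indexes the coordinates, yields the conclusion. The only difference is that you spell out the bookkeeping (the inclusion $\PAut \subseteq \Aut$ and the identification of a coordinate permutation as a map $DP\gamma$ with trivial $D$ and $\gamma$) which the paper leaves implicit; this is a welcome clarification but not a different argument.
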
 

\begin{proof}
Recall that the group action of $\AGL(1, q^m)$ on $\gf(q^m)$ is doubly transitive. 
By Theorem \ref{thm-OmegaCodeAffineInvariant}, $\AGL(1, q^m) \subseteq \Aut(\widehat{\cGRM}{(q,m, h)})$. 
Recall that the coordinates of $\Aut(\widehat{\cGRM}{(q,m, h)})$ are indexed by the elements in 
$\gf(q^m)$. The desired conclusion then follows. 
\end{proof}

\subsection{$\widehat{\cGRM}(q,m,h)$ holds $2$-designs}

Let $\cP$ be a set of $v \ge 1$ elements, and let $\cB$ be a set of $k$-subsets of $\cP$, where $k$ is
a positive integer with $1 \leq k \leq v$. Let $t$ be a positive integer with $t \leq k$. The pair
$\bD = (\cP, \cB)$ is called a $t$-$(v, k, \lambda)$ {\em design\index{design}}, or simply {\em $t$-design\index{$t$-design}}, if every $t$-subset of $\cP$ is contained in exactly $\lambda$ elements of
$\cB$. The elements of $\cP$ are called points, and those of $\cB$ are referred to as blocks.
We usually use $b$ to denote the number of blocks in $\cB$.  A $t$-design is called {\em simple\index{simple}} if $\cB$ does not contain repeated blocks. In this paper, we consider only simple 
$t$-designs.  A $t$-design is called {\em symmetric\index{symmetric design}} if $v = b$. It is clear that $t$-designs with $k = t$ or $k = v$ always exist. Such $t$-designs are {\em trivial}. In this paper, we consider only $t$-designs with $v > k > t$.
A $t$-$(v,k,\lambda)$ design is referred to as a {\em Steiner system\index{Steiner system}} if $t \geq 2$ and $\lambda=1$, and is denoted by $S(t,k, v)$. 

Let $\C$ be a $[v, \kappa, d]$ linear code over $\gf(q)$. Let $A_i:=A_i(\C)$, which denotes the
number of codewords with Hamming weight $i$ in $\C$, where $0 \leq i \leq v$. The sequence 
$(A_0, A_1, \cdots, A_{v})$ is
called the \textit{weight distribution} of $\C$, and $\sum_{i=0}^v A_iz^i$ is referred to as
the \textit{weight enumerator} of $\C$. For each $k$ with $A_k \neq 0$,  let $\cB_k$ denote
the set of the supports of all codewords with Hamming weight $k$ in $\C$, where the coordinates of a codeword
are indexed by $(0,1,2, \cdots, v-1)$. Let $\cP=\{0, 1, 2, \cdots, v-1\}$.  The pair $(\cP, \cB_k)$
may be a $t$-$(v, k, \lambda)$ design for some positive integer $\lambda$, which is called a 
\emph{support design} of the code. In such a case, we say that the code $\C$ holds a $t$-$(v, k, \lambda)$ 
design.

A proof of the following theorem can be found in \cite[p. 308]{HP03}. 
 
\begin{theorem}\label{thm-designCodeAutm}
Let $\C$ be a linear code of length $n$ over $\gf(q)$ where $\GAut(\C)$ is $t$-transitive. Then the codewords of any weight $i \geq t$ of $\C$ hold a $t$-design.
\end{theorem}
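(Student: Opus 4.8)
The plan is to exploit that the permutation parts of the automorphisms in $\GAut(\C)$ form a $t$-transitive group acting on the coordinate set, and that the set of supports of weight-$i$ codewords is invariant under this group; a $t$-design then drops out of a short orbit-counting argument.

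First I would fix the coordinate set $\cP = \{0, 1, \ldots, n-1\}$ and, for a weight $i \geq t$ with $A_i(\C) \neq 0$, let $\cB_i$ denote the set of supports of all codewords of Hamming weight $i$ in $\C$. The next step is to show that $\cB_i$ is invariant under the permutation part of every automorphism. By the description given in the excerpt, every element of $\GAut(\C)$ has the form $DP\gamma$, where $D$ is a diagonal matrix with nonzero diagonal entries, $P$ is a permutation matrix, and $\gamma$ is an automorphism of $\gf(q)$. Applying such a map to a codeword $\bc$ of weight $i$ yields another codeword of $\C$, since the map stabilises $\C$; moreover the weight is preserved, because scaling by nonzero field elements, applying $\gamma$ (which fixes $0$ and permutes $\gf(q)^*$), and permuting coordinates all leave the number of nonzero entries unchanged, and only $P$ relocates the nonzero positions. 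Hence the image again has weight $i$ and its support is the image of the support of $\bc$ under $P$. Writing $G$ for the group of permutation parts $P$ arising from $\GAut(\C)$, we conclude that $P(B) \in \cB_i$ for every $B \in \cB_i$ and every $P \in G$; that is, $\cB_i$ is $G$-invariant.

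The core step is the counting. For a $t$-subset $T \subseteq \cP$ I would set $\lambda_T = |\{B \in \cB_i : T \subseteq B\}|$. For any $P \in G$ the assignment $B \mapsto P(B)$ is a bijection of $\cB_i$ onto itself satisfying $T \subseteq B$ if and only if $P(T) \subseteq P(B)$, so it restricts to a bijection between the blocks containing $T$ and the blocks containing $P(T)$; thus $\lambda_T = \lambda_{P(T)}$. By hypothesis $\GAut(\C)$ is $t$-transitive, which, by the definition recalled earlier, means that $G$ acts transitively on ordered $t$-tuples of distinct coordinates, hence transitively on $t$-subsets of $\cP$. Therefore $\lambda_T$ takes a single common value $\lambda$ over all $t$-subsets $T$, and since at least one block of size $i \geq t$ is present the value $\lambda$ is positive. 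As $\cB_i$ is by construction a set without repeated blocks, $(\cP, \cB_i)$ is a simple $t$-$(n, i, \lambda)$ design.

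I expect the only delicate point to be the invariance step: one must verify that the full automorphism $DP\gamma$, and not merely a bare coordinate permutation, both preserves Hamming weight and acts on supports exactly through its permutation part $P$. The facts that $\gamma$ fixes $0$ and that the diagonal entries of $D$ are nonzero are precisely what guarantee that supports transform under $P$ alone; once this is secured, the $t$-transitivity hypothesis finishes the argument through the elementary orbit computation above.
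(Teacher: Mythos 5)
Your proof is correct: the invariance of the set of weight-$i$ supports under the permutation parts of elements $DP\gamma$ (valid because $D$ has nonzero diagonal entries and $\gamma$ fixes $0$), combined with the paper's definition of $t$-transitivity of $\GAut(\C)$, gives the constant count $\lambda_T = \lambda$ over all $t$-subsets exactly as you argue. The paper does not supply its own proof but cites \cite[p.~308]{HP03}, and the argument there is essentially the same weight-preservation and orbit-counting argument you have written out, so your proposal matches the intended proof.
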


The task of this section is to prove the following theorem. 

\begin{theorem}\label{thm-2designOmegaCode}
Let $m \geq 2$ and $1 \leq h \leq m-1$. Let $\hat{A}_i$ denote the number of codewords of weight $i$ in 
$\widehat{\cGRM}(q,m,h)$. Then for each $i$ with $\hat{A}_i \neq 0$, the supports of the codewords of 
weight $i$ form a $2$-design.   
\end{theorem}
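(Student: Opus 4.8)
The plan is to derive the result from the transitivity of the automorphism group together with Theorem~\ref{thm-designCodeAutm}. That theorem guarantees that once $\GAut(\widehat{\cGRM}(q,m,h))$ is $2$-transitive, the supports of the codewords of every weight $i \geq 2$ form a $2$-design. By Theorem~\ref{thm-gRMcodeExt} the minimum distance of $\widehat{\cGRM}(q,m,h)$ is at least $(q^{h+1}-1)/(q-1) \geq q+1 \geq 3$, so every weight $i$ with $\hat{A}_i \neq 0$ and $i \neq 0$ already satisfies $i \geq 2$; hence the hypothesis of Theorem~\ref{thm-designCodeAutm} is met for all relevant weights. The whole theorem therefore reduces to showing that $\GAut(\widehat{\cGRM}(q,m,h))$ is doubly transitive.

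For $q$ prime this is precisely Corollary~\ref{cor-Omegacodes2Transitive}, so the real work lies in handling arbitrary $q = p^t$, for which the affine-invariance result Theorem~\ref{thm-OmegaCodeAffineInvariant} was stated only in the prime case. First I would re-establish that $\widehat{\cGRM}(q,m,h)$ is affine-invariant for general $q$ by verifying the Kasami--Lin--Peterson criterion of Theorem~\ref{thm-KLPtheorem}. Its defining set is $\overline{T} = \{0\} \cup \{1 \leq a \leq n-1 : 1 \leq \wt(a) \leq h\}$, and I must show that $s \in \overline{T}$ together with $r \preceq s$ forces $r \in \overline{T}$, where $\preceq$ is the domination order on $p$-adic digits while $\wt$ counts the nonzero digits of the $q$-adic expansion.

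The key step is to translate the $p$-adic domination $r \preceq s$ into a statement about $q$-adic digits. Writing $q = p^t$ and grouping the $p$-adic digits of any integer into consecutive blocks of length $t$, each block encodes exactly one $q$-adic digit. Hence $r \preceq s$ forces each $q$-adic digit of $r$ to be at most the corresponding $q$-adic digit of $s$; in particular a vanishing $q$-adic digit of $s$ forces the matching digit of $r$ to vanish. Consequently the $q$-adic support of $r$ is contained in that of $s$, so $\wt(r) \leq \wt(s) \leq h$. Combined with $\wt(r) \geq 1$ whenever $r \geq 1$ (the case $r = 0 \in \overline{T}$ being immediate), this gives $r \in \overline{T}$ and establishes affine-invariance for every $q$.

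Finally, affine-invariance yields $\AGL(1, q^m) \subseteq \PAut(\widehat{\cGRM}(q,m,h)) \subseteq \GAut(\widehat{\cGRM}(q,m,h))$. Since the coordinates are indexed by $\gf(q^m)$ and $\AGL(1,q^m)$ acts $2$-transitively on $\gf(q^m)$, the group $\GAut(\widehat{\cGRM}(q,m,h))$ is doubly transitive, and Theorem~\ref{thm-designCodeAutm} with $t=2$ completes the proof. The only genuinely new ingredient beyond the cited machinery is the $p$-adic/$q$-adic digit-block observation of the third paragraph; I expect this to be the main obstacle, although it is routine once the digits are grouped correctly, since everything else is a direct invocation of the earlier results.
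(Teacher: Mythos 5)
Your proposal is correct, and its skeleton is exactly the paper's: the paper proves this theorem in two lines by citing Theorem~\ref{thm-designCodeAutm} together with Corollary~\ref{cor-Omegacodes2Transitive}. The genuine difference is that you noticed, and repaired, a gap in that chain of citations. Corollary~\ref{cor-Omegacodes2Transitive} rests on Theorem~\ref{thm-OmegaCodeAffineInvariant}, which is stated and proved only for \emph{prime} $q$ (there the Kasami--Lin--Peterson order and the weight $\wt$ live on the same digit expansion), whereas Theorem~\ref{thm-2designOmegaCode} is asserted for every prime power $q=p^t$. Your third paragraph supplies precisely the missing argument: the KLP order $\preceq$ of Theorem~\ref{thm-KLPtheorem} compares $p$-adic digits, while the defining set $\overline{T}$ of $\widehat{\cGRM}(q,m,h)$ is described by the $q$-adic weight; grouping the $mt$ $p$-adic digits into $m$ blocks of length $t$, the relation $r \preceq s$ forces each $q$-adic digit of $r$ to be at most the corresponding digit of $s$, so the $q$-adic support of $r$ is contained in that of $s$ and $\wt(r) \leq \wt(s) \leq h$, giving $r \in \overline{T}$. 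This block argument is correct and is what the paper would need to justify the theorem in its stated generality (it also sidesteps the small slip in the paper's prime-case proof, where ``$\wt(r) \leq \wt(s) < h$'' should read ``$\leq h$''). Your explicit check that every nonzero weight satisfies $i \geq 2$ --- immediate from the bound $d \geq (q^{h+1}-1)/(q-1) \geq 3$ of Theorem~\ref{thm-gRMcodeExt} --- is likewise a hypothesis of Theorem~\ref{thm-designCodeAutm} that the paper leaves implicit. In short: same route, but your version actually proves the statement for all prime powers $q$, while the paper's own proof, read literally, covers only prime $q$.
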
 

\begin{proof}
The desired conclusion follows from Theorem \ref{thm-designCodeAutm} and Corollary 
\ref{cor-Omegacodes2Transitive}. 
\end{proof}

\begin{example} 
Let $(q,m,h)=(3,3,2)$. Then $\widehat{\cGRM}(q,m,h)$ has parameters $[27, 8, 14]$ and weight enumerator 
$$ 
1 + 810z^{14} + 702z^{15} + 1404z^{17} + 780z^{18} + 2106z^{20} + 702z^{21} +  54z^{26} + 2z^{27}. 
$$ 
It holds $2$-$(27, k, \lambda)$ designs for the following pairs $(k, \lambda)$: 
$$ 
(14, 105), \ (15, 105), \ (17, 272), \ (18, 170), \ (20, 570), \ (21, 210).    
$$
\end{example} 

We remark that $\widehat{\cGRM}(q,m,h)$ does not hold $3$-designs in general. Of course, 
it holds $3$-designs when $q=2$, as $\widehat{\cGRM}(q,m,h)$ is identical with the classical 
binary Reed-Muller code when $q=2$. Though $\widehat{\cGRM}(q,m,h)$ holds many $2$-$(q^m, k, \lambda)$ 
designs, determining the parameters $k$ and $\lambda$ may be hard. The reader is cordially invited 
to attack this problem.

\section{A family of reversible cyclic codes from the codes $\cGRM(q,m,h)$}

A code $\C$ is called {\em reversible} if $(c_0, c_1, \ldots, c_{n-1}) \in \C$ implies that  
$(c_{n-1}, c_{n-2}, \ldots, c_{0})$ $\in \C$. 

Let $f(x)=f_{j}x^j+f_{j-1}x^{j-1}+ \cdots + f_1x+f_0$ be a polynomial over $\gf(q)$ with $f_j \ne 0$ 
and $f_0 \ne 0$. The reciprocal $f^*(x)$ of $f(x)$ is defined by 
$$ 
f^*(x)=f_0^{-1}x^{j}f(x^{-1}).  
$$ 
A polynomial $f(x)$ is called \emph{self-reciprocal} if $f$ and its reciprocal are identical.  

The conclusions of the following theorem are known in the literature, and are easy to prove. We will employ 
some of them later.  

\begin{theorem}\label{thm-ReversibleCyclicCodes}
Let $\C$ be a cyclic code over $\gf(q)$ with generator polynomial $g(x)$. Then the following statements are equivalent. 
\begin{itemize}
\item $\C$ is reversible. 
\item $g$ is self-reciprocal. 
\item $\beta^{-1}$ is a root of $g$ for every root $\beta$ of $g(x)$ over the splitting field of $g(x)$.
\end{itemize}
\end{theorem}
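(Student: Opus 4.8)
The plan is to establish the three-way equivalence by proving a cycle of implications, say $\C$ reversible $\Rightarrow$ $g$ self-reciprocal $\Rightarrow$ the root condition $\Rightarrow$ $\C$ reversible. The natural starting point is to translate the reversal operation on codewords into an algebraic operation on polynomials in $\cR_n=\gf(q)[x]/(x^n-1)$. First I would observe that reversing a codeword $(c_0,c_1,\ldots,c_{n-1})$ to $(c_{n-1},\ldots,c_1,c_0)$ corresponds, under the identification $c(x)=\sum_{i=0}^{n-1} c_i x^i$, to the map $c(x)\mapsto x^{n-1}c(x^{-1})\bmod (x^n-1)$, since $x^{-1}\equiv x^{n-1}$ in $\cR_n$. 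Thus $\C$ is reversible if and only if this reversal map sends the ideal $\langle g(x)\rangle$ to itself.

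For the implication $\C$ reversible $\Rightarrow$ $g$ self-reciprocal, I would apply the reversal map to the generator $g(x)$ itself, which has degree $n-k$ where $k=\dim\C$. The reversed polynomial is essentially $x^{\deg g} g(x^{-1})$, which up to a nonzero scalar is exactly the reciprocal $g^*(x)$ as defined in the statement (here the leading and constant coefficients of $g$ are nonzero because $g\mid x^n-1$ and $\gcd(n,q)=1$ rules out $0$ as a root, so $x\nmid g$). Since reversal preserves $\C$, the polynomial $g^*(x)$ lies in $\C=\langle g\rangle$ and has the same degree as $g$; as $g$ is the unique monic generator of least degree, $g^*$ must be a scalar multiple of $g$, and matching leading coefficients forces $g^*=g$.

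For $g$ self-reciprocal $\Rightarrow$ the root condition, I would argue directly about roots: if $\beta$ is a root of $g$ then $g(\beta)=0$, and evaluating $g^*(\beta)=\beta^{-1\cdot\deg g}\cdots$ — more cleanly, from $g^*(x)=f_0^{-1}x^{\deg g}g(x^{-1})$ one sees that $\beta$ is a root of $g^*$ exactly when $\beta^{-1}$ is a root of $g$. Since $g^*=g$, the roots of $g$ are closed under inversion. Finally, for the root condition $\Rightarrow$ $\C$ reversible, I would show the root condition forces $g^*=g$ (both are monic of the same degree with the same root multiset over the splitting field, noting no root is $0$), and then reverse the first argument: for any codeword $c(x)=a(x)g(x)$, its reversal is a scalar multiple of $a^*(x)g^*(x)=a^*(x)g(x)\in\langle g\rangle$, so $\C$ is closed under reversal.

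The main obstacle — though it is minor — is bookkeeping with the scalar factor $f_0^{-1}$ in the definition of $f^*$ and the degree shift: one must be careful that reversal as a vector operation, which acts within the fixed length $n$, agrees with the polynomial reciprocal of $g$, which is taken with respect to $\deg g$ rather than $n-1$. The reconciliation uses that the reversed vector of a codeword generated by $g$ has its support controlled by $\deg g$, so the two notions of ``reverse'' differ only by the shift $x^{n-1-\deg g}$ that maps $\langle g\rangle$ into itself. Verifying that $x$ does not divide $g$ (equivalently $0\notin T$, which holds since $\alpha^0=1$ is never a prescribed root here as $\gcd(n,q)=1$ guarantees all $n$-th roots of unity are distinct and nonzero) is what legitimises speaking of the reciprocal at all.
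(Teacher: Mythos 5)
Your argument is sound, but be aware that the paper itself offers no proof of this theorem: it is stated as ``known in the literature, and easy to prove'' (the result goes back to Massey's 1964 paper on reversible codes, which the paper cites), so there is no internal argument to compare against. Your cycle of implications is the standard route, and each step checks out: reversal of a length-$n$ vector is $c(x)\mapsto x^{n-1}c(x^{-1})$ in $\cR_n$; applying this to $g$ itself, using that $x$ is a unit of $\cR_n$ and that the monic generator of minimal degree is unique, forces $g^*=g$; the passage between $g^*=g$ and closure of the root set under inversion is immediate because $g(0)\neq 0$ and $g$ is squarefree (all roots simple, so ``same roots'' gives ``same polynomial'' for monic polynomials of equal degree); and in the last implication the reversal of $a(x)g(x)$ is indeed a unit multiple of $\bigl(\text{reverse of }a\bigr)\cdot g^*(x)$, hence lies in $\langle g\rangle$ once $g^*=g$.

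Two of your parenthetical justifications are wrong, though harmlessly so. First, $\gcd(n,q)=1$ has nothing to do with $0$ not being a root of $x^n-1$: the constant term of $x^n-1$ is $-1\neq 0$ in every characteristic, so $x\nmid g$ follows from $g\mid x^n-1$ alone; what the coprimality hypothesis actually buys you is that $x^n-1$, hence $g$, is squarefree, which is the fact you need for the root-multiset step. Second, ``$x$ does not divide $g$'' is not equivalent to ``$0\notin T$'': the defining set $T$ records which powers $\alpha^s$ are roots of $g$, so $0\in T$ means $(x-1)\mid g$, which certainly can happen (the reversible codes $\overline{\cGRM}(q,m,h)$ constructed in this very paper have $x-1$ as a factor of their generator polynomials) and is irrelevant to whether $g(0)\neq 0$. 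Since the facts your proof actually uses ($g(0)\neq 0$ and squarefreeness) are true and correctly available, these misstatements do not create a gap; just delete or repair those asides.
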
 

If $\C$ is a reversible cyclic code of length $n$ over $\gf(q)$, then $\C \oplus \C^\perp=\gf(q)^n$. 
Such a linear code is called a \emph{linear code with complement dual (LCD)}, as its dual code is equal 
to its complement. 

LCD cyclic codes over finite fields are interesting in both theory and applications 
\cite{Massey64,Massey92,Sendr}. An important application of LCD codes in cryptography 
was recently documented in \cite{CG}. This is our major motivation of constructing LCD 
codes.

We now employ the codes $\cGRM{(q,m,h)}$ to construct reversible cyclic codes. 
To this end, we need to make some preparations. 

Recall that  
\begin{eqnarray}\label{eqn-Index2}
I{(q,m,t)}=\{1 \leq i \leq n-1: 1 \leq \wt(i) \leq t \}   
\end{eqnarray} 
and 
$$
-I{(q,m,t)}=\{n-a: a \in I_{(q,m,t)}\},  
$$
where $1 \leq t \leq m$.

\begin{lemma}\label{lem-RMc} 
If $1 \leq t \leq \lceil m/2 \rceil-1$, then $I{(q,n,t)} \cap (-I{(q,n,t)}) = \emptyset$. 
\end{lemma}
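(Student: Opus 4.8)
The plan is to translate membership in both sets into a pair of inequalities on Hamming weights, and then extract a numerical contradiction from the hypothesis $t \le \lceil m/2\rceil - 1$. First I would observe that an element $c$ lies in $I(q,m,t)$ precisely when $1 \le \wt(c) \le t$, and that $c$ lies in $-I(q,m,t)$ precisely when $n-c \in I(q,m,t)$, since writing $c = n-a$ with $a \in I(q,m,t)$ is the same as setting $a = n-c$ and requiring $1 \le \wt(n-c) \le t$. Hence a common element $c \in I(q,m,t) \cap (-I(q,m,t))$ would have to satisfy $\wt(c) \le t$ and $\wt(n-c) \le t$ simultaneously.

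Next I would invoke Lemma \ref{lem-april91}, which gives $\wt(n-c) = m - e(c)$, where $e(c) = |\{0 \le j \le m-1 : c_j = q-1\}|$ is the number of $q$-adic digits of $c$ equal to $q-1$. The requirement $\wt(n-c) \le t$ then reads $e(c) \ge m - t$. On the other hand, every digit equal to $q-1$ is in particular nonzero, so $e(c) \le \wt(c)$; combined with $\wt(c) \le t$ this forces $e(c) \le t$.

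Putting the two bounds together yields $m - t \le e(c) \le t$, hence $m \le 2t$. The remaining step is to check that the hypothesis $t \le \lceil m/2 \rceil - 1$ rules this out: a short parity case check shows $2t < m$ in both cases (for $m$ even one gets $2t \le m-2$, and for $m$ odd one gets $2t \le m-1$), which contradicts $m \le 2t$. Therefore no common element can exist and the intersection is empty.

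I expect no serious obstacle in this argument; everything rests on the weight identity of Lemma \ref{lem-april91} together with the elementary bound $e(c) \le \wt(c)$. The only places requiring a little care are the bookkeeping in the parity case distinction, and stating the equivalence $c \in -I(q,m,t) \iff 1 \le \wt(n-c) \le t$ correctly, since $I$ and $-I$ are indexed by complementary quantities.
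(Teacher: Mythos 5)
Your proposal is correct and follows essentially the same route as the paper: both arguments rest on Lemma~\ref{lem-april91} together with the observation that digits equal to $q-1$ are in particular nonzero (i.e.\ $\gamma(c)\ge 0$, which is your bound $e(c)\le \wt(c)$), so that $\wt(c)\le t$ and $\wt(n-c)\le t$ cannot hold simultaneously when $2t<m$. The paper phrases this directly ($\wt(i)\le t$ forces $\wt(n-i)\ge m-\wt(i)>t$) rather than as a contradiction $m\le 2t$, but the content and the parity check on $\lceil m/2\rceil-1$ are identical.
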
 

\begin{proof}
Note that 
$$n=q^m-1=(q-1)q^{m-1} + (q-1)q^{m-2} + \cdots + (q-1)q+(q-1)q^0. 
$$ 
By Lemma \ref{lem-april91}, $\wt(n-i) \geq m -\wt(i)$ for all $i \in \cN$. 

If $i \in \cN$ and $\wt(i) \leq \lceil m/2 \rceil-1$, then 
$\wt(n-i)  \geq m -\wt(i) > \lceil m/2 \rceil-1$. 
The desired conclusion then follows. 
\end{proof}

Let $g_{(q, m, h)}(x)$ be the polynomial of \eqref{eqn-generatorplym}, which is the generator 
polynomial of the cyclic code $\cGRM{(q,m,h)}$. 
Let $g^*_{(q, m, h)}(x)$ denote the reciprocal of $g_{(q, m, h)}(x)$. Set 
$$ 
g(x)=(x-1)\lcm\left(g_{(q, m, h)}(x), g^*_{(q, m, h)}(x)\right). 
$$
Let $\overline{\cGRM}{(q, m, h)}$ denote the cyclic code of length $n$ over $\gf(q)$ with generator polynomial $g(x)$. It follows from Theorem \ref{thm-ReversibleCyclicCodes} that $\overline{\cGRM}{(q, m, h)}$ is reversible.

Information on the parameters of the reversible cyclic code $\overline{\cGRM}{(q, m, h)}$ is given 
in the theorem below. 

\begin{theorem}\label{thm-bCode}
If $1 \leq h \leq \lceil m/2 \rceil-1$, then the reversible cyclic code $\overline{\cGRM}{(q, m, h)}$ has minimum distance 
$$d \geq 2\frac{q^{h +1}-1}{q-1}$$ 
and dimension 
\begin{eqnarray}
q^m-2 \sum_{i=0}^{h} \binom{m}{i} (q-1)^i. 
\end{eqnarray}

\end{theorem}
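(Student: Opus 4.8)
The plan is to work entirely with the defining set of $\overline{\cGRM}{(q,m,h)}$ and read off both parameters from it. Since $g(x) = (x-1)\lcm\left(g_{(q,m,h)}(x), g^*_{(q,m,h)}(x)\right)$, its roots are $\alpha^0 = 1$ together with the roots of $g_{(q,m,h)}$ and those of its reciprocal. The roots of $g_{(q,m,h)}$ are $\{\alpha^a : a \in I(q,m,h)\}$, and forming the reciprocal replaces each $\alpha^a$ by $\alpha^{-a}=\alpha^{n-a}$, so the roots of $g^*_{(q,m,h)}$ are $\{\alpha^a : a \in -I(q,m,h)\}$. The $\lcm$ guarantees that we take the union of the two root sets without multiplicity, so the defining set of $\overline{\cGRM}{(q,m,h)}$ is
$$ T = \{0\} \cup I(q,m,h) \cup (-I(q,m,h)). $$

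For the dimension I would compute $|T|$ and use $\dim = n - |T|$. The element $0$ lies in neither $I(q,m,h)$ (whose elements are $\geq 1$) nor $-I(q,m,h)$ (whose elements are $\leq n-1$). The hypothesis $1 \leq h \leq \lceil m/2 \rceil - 1$ lets me invoke Lemma \ref{lem-RMc} to conclude $I(q,m,h) \cap (-I(q,m,h)) = \emptyset$; this disjointness is the crucial point and is exactly where the bound on $h$ is used. Since $|I(q,m,h)| = |-I(q,m,h)| = \sum_{i=1}^{h}\binom{m}{i}(q-1)^i$, it follows that
$$ |T| = 1 + 2\sum_{i=1}^{h}\binom{m}{i}(q-1)^i, $$
and subtracting from $n = q^m - 1$ yields $\dim = q^m - 2\sum_{i=0}^{h}\binom{m}{i}(q-1)^i$ after absorbing the two constants $1+1$ into the $i=0$ term.

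For the minimum distance I would exhibit a long arc of consecutive residues inside $T$ and apply the BCH bound. Writing $\delta_0 = (q^{h+1}-1)/(q-1)$, the proof of Theorem \ref{thm-gRMcode} already shows $\{1, 2, \ldots, \delta_0 - 1\} \subseteq I(q,m,h) \subseteq T$. Because $T$ is invariant under negation modulo $n$ (indeed $-T = \{0\} \cup (-I(q,m,h)) \cup I(q,m,h) = T$), the negated run $\{n - \delta_0 + 1, \ldots, n-1\}$ lies in $-I(q,m,h) \subseteq T$ as well. Together with $0 \in T$, these two runs join through $0$ into a single block of $2\delta_0 - 1$ consecutive residues modulo $n$, and the disjointness from Lemma \ref{lem-RMc} guarantees the two runs do not collide, so these residues are genuinely distinct. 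The BCH bound then gives
$$ d \geq (2\delta_0 - 1) + 1 = 2\delta_0 = 2\,\frac{q^{h+1}-1}{q-1}. $$

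The argument is essentially bookkeeping once the defining set is identified; the one genuine input is the disjointness $I(q,m,h) \cap (-I(q,m,h)) = \emptyset$ supplied by Lemma \ref{lem-RMc}. That disjointness is what simultaneously makes the dimension count exact (no double counting in $|T|$) and makes the two consecutive arcs meet cleanly at $0$ to form the length-$(2\delta_0-1)$ block needed for the BCH bound. I expect the only place demanding care is confirming that these two arcs really are disjoint and hence contribute $2\delta_0 - 1$ \emph{distinct} consecutive positions rather than overlapping; this is precisely what the hypothesis $h \leq \lceil m/2 \rceil - 1$ secures.
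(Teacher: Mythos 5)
Your proposal is correct and follows essentially the same route as the paper: the disjointness $I(q,m,h)\cap(-I(q,m,h))=\emptyset$ from Lemma \ref{lem-RMc} yields the dimension (the paper phrases this as coprimality of $g_{(q,m,h)}$ and $g^*_{(q,m,h)}$, which is equivalent to your cardinality count on the defining set), and the minimum distance comes from the same block of $2(q^{h+1}-1)/(q-1)-1$ consecutive roots $\alpha^{n-\delta_0+1},\ldots,\alpha^{n-1},\alpha^0,\alpha^1,\ldots,\alpha^{\delta_0-1}$ via the BCH bound.
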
  

\begin{proof}
When $1 \leq h \leq \lceil m/2 \rceil-1$,   
it follows from Lemma \ref{lem-RMc} that $g_{(q, m, h)}(x)$ and $g^*_{(q, m, h)}(x)$ are relatively prime. Consequently, 
$g(x)=(x-1)g_{(q, m, h)}(x) g^*_{(q, m, h)}(x)$. Therefore,  
$$ 
\deg(g(x))=2\deg(g_{(q, m, h)}(x))+1. 
$$ 
By Theorem \ref{thm-gRMcode}, 
$$ 
\deg(g_{(q, m, h)}(x))=\sum_{i=1}^{h} \binom{m}{i} (q-1)^i. 
$$ 
The desired conclusion on the dimension then follows. 

In this case, it follows from the proof of Theorem \ref{thm-gRMcode} that $g(x)$ 
has the roots $\alpha^i$ for all $i$ in the set 
$$ 
\left\{n-\left(\frac{q^{h +1}-1}{q-1}-1\right), \cdots, n-2, n-1, 0, 1, 2, \cdots, \frac{q^{h +1}-1}{q-1}-1\right\}. 
$$   
The desired conclusion on the minimum distance then follows from the BCH bound. 
\end{proof}

\begin{example} 
The following is a list of examples of the reversible cyclic code $\overline{\cGRM}{(q, m, h)}$. 
\begin{itemize}
\item When $(q,m, h)=(2,4,1)$, the code $\overline{\cGRM}{(q, m, h)}$ has parameters $[15,6,6]$. 
\item When $(q,m, h)=(2,6,2)$, the code $\overline{\cGRM}{(q, m, h)}$ has parameters $[63,20,14]$. 
\item When $(q,m, h)=(3,4,1)$, the code $\overline{\cGRM}{(q, m, h)}$ has parameters $[80,63,8]$. 
\end{itemize}
\end{example}

\begin{problem}\label{prob-three} 
Determine the minimum distance of the code $\overline{\cGRM}{(q, m, h)}$ of 
Theorem \ref{thm-bCode}. 
\end{problem}

\begin{theorem}\label{thm-bCode2}
Let $m \geq 2$ be even. 
Then the reversible cyclic code $\overline{\cGRM}{(q, m, m/2)}$ has minimum distance 
$$d \geq 2\frac{q^{(m+2)/2}-1}{q-1}$$ 
and dimension 
\begin{eqnarray}
q^m-2 \sum_{i=0}^{m/2} \binom{m}{i} (q-1)^i  + \binom{m}{\frac{m}{2}}. 
\end{eqnarray}
\end{theorem}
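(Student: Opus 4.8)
The plan is to prove the dimension and the minimum-distance bound separately. The minimum-distance argument will be essentially identical to that of Theorem~\ref{thm-bCode}; the genuinely new work is the dimension, because Lemma~\ref{lem-RMc} guarantees disjointness of $I{(q,m,h)}$ and $-I{(q,m,h)}$ only for $h \leq \lceil m/2\rceil - 1$, and at the boundary value $h = m/2$ this fails, so $g_{(q,m,m/2)}(x)$ and $g^*_{(q,m,m/2)}(x)$ are no longer coprime.

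For the minimum distance, I would first invoke Theorem~\ref{thm-ReversibleCyclicCodes} to recall that $\overline{\cGRM}{(q,m,m/2)}$ is reversible, so its defining set is symmetric under $a \mapsto n-a$. As in the proof of Theorem~\ref{thm-gRMcode}, every integer $a$ with $1 \leq a \leq (q^{m/2+1}-1)/(q-1)-1$ satisfies $\wt(a) \leq m/2$, hence lies in $I{(q,m,m/2)}$; by reversibility the negatives $n-a$ lie in the defining set as well, and $0$ is a root because of the factor $(x-1)$. This produces a run of $2(q^{(m+2)/2}-1)/(q-1)-1$ consecutive elements in the defining set, so the BCH bound gives $d \geq 2(q^{(m+2)/2}-1)/(q-1)$.

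The heart of the proof is the dimension. Since $g(x)=(x-1)\lcm(g_{(q,m,m/2)}(x),g^*_{(q,m,m/2)}(x))$ and the roots of $g_{(q,m,m/2)}$ and of $g^*_{(q,m,m/2)}$ correspond to $I{(q,m,m/2)}$ and $-I{(q,m,m/2)}$ respectively, we have $\deg(\lcm)=|I{(q,m,m/2)}\cup(-I{(q,m,m/2)})|=2|I{(q,m,m/2)}|-|I{(q,m,m/2)}\cap(-I{(q,m,m/2)})|$. One checks $0\notin I{(q,m,m/2)}\cup(-I{(q,m,m/2)})$ (because $\wt(n)=m>m/2$), so the factor $(x-1)$ contributes a new root and $\deg(g)=1+\deg(\lcm)$. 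It remains to evaluate the intersection. Writing $b=\sum_j b_jq^j$, Lemma~\ref{lem-april91} gives $\wt(n-b)=m-|\{j:b_j=q-1\}|$, while $\wt(b)=m-|\{j:b_j=0\}|$. Membership in $I{(q,m,m/2)}\cap(-I{(q,m,m/2)})$ forces $\wt(b)\leq m/2$ and $\wt(n-b)\leq m/2$, i.e. at least $m/2$ digits equal $0$ and at least $m/2$ digits equal $q-1$; as these digit-sets are disjoint, exactly $m/2$ digits are $0$ and exactly $m/2$ are $q-1$. Counting the placements of the $q-1$'s yields $|I{(q,m,m/2)}\cap(-I{(q,m,m/2)})|=\binom{m}{m/2}$, and substituting back through $\deg(g)=1+2\sum_{i=1}^{m/2}\binom{m}{i}(q-1)^i-\binom{m}{m/2}$ into $\dim=n-\deg(g)$ recovers the claimed value $q^m-2\sum_{i=0}^{m/2}\binom{m}{i}(q-1)^i+\binom{m}{m/2}$.

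The main obstacle, and the only place where this proof departs from that of Theorem~\ref{thm-bCode}, is precisely the intersection count $|I{(q,m,m/2)}\cap(-I{(q,m,m/2)})|=\binom{m}{m/2}$: at $h=m/2$ the two factor polynomials share exactly these $\binom{m}{m/2}$ common roots, so the clean degree formula $\deg(g)=2\deg(g_{(q,m,h)})+1$ used earlier no longer applies and must be corrected by the overlap term. Everything else in the argument is a routine substitution.
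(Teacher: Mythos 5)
Your proof is correct and follows essentially the same route as the paper's: the BCH bound applied to the symmetric run of consecutive roots around $0$, and the dimension obtained by correcting $2\deg(g_{(q,m,m/2)})+1$ by the overlap $|I(q,m,m/2)\cap(-I(q,m,m/2))|=\binom{m}{m/2}$, which both you and the paper identify as the integers whose $q$-adic digits are $q-1$ in exactly $m/2$ positions and $0$ elsewhere. Your inclusion--exclusion on the root sets is just a rephrasing of the paper's $\gcd$ formulation, and your digit-counting derivation of the intersection via Lemma~\ref{lem-april91} is the same argument the paper compresses into its citation of Lemmas~\ref{lem-RMc}, \ref{lem-april91} and \ref{lem-april92}.
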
 

\begin{proof}
The conclusion on the minimum distance comes from the BCH bound. We now prove the conclusion on 
the dimension of the code. It follows from Lemmas \ref{lem-RMc}, \ref{lem-april91} and \ref{lem-april92} that 
$a \in I{(q,m,m/2)} \cap (-I{(q,m,m/2)})$ if and only if $\wt(a)=m/2$ and the $q$-adic expression 
of $a$ is of the form 
$$ 
(q-1)(q^{i_1}+q^{i_2} + \cdots + q^{i_{m/2}}), 
$$
where $0 \leq i_1 < i_2 < \cdots < i_{m/2} \leq m-1$. Consequently, 
$$ 
|I{(q,m,m/2)} \cap (-I{(q,m,m/2)})|=\binom{m}{\frac{m}{2}}. 
$$

As before, let $g_{(q, m, m/2)}(x)$ be the generator polynomial of the code $\cGRM{(q,m, m/2)}$. 
Then the generator polynomial of $\overline{\cGRM}{(q, m, m/2)}$ is given by 
\begin{eqnarray*}
g(x) &=& (x-1)\lcm\left(g_{(q, m, m/2)}(x),\, g^*_{(q, m, m/2)}(x)\right) \\
&=& \frac{(x-1)g_{(q, m, m/2)}(x)g^*_{(q, m, m/2)}(x)}{\gcd\left(g_{(q, m, m/2)}(x), \, g^*_{(q, m, m/2)}(x)\right)}.  
\end{eqnarray*}
Therefore, 
\begin{eqnarray*}
\deg(g(x)) &=& 2\deg(g_{(q, m, m/2)}(x))+1-\deg(\gcd(g_{(q, m, m/2)}(x), g^*_{(q, m, m/2)}(x)) \\
&=& 2\deg(g_{(q, m, m/2)}(x))+1- |I{(q,m,m/2)} \cap (-I{(q,m,m/2)})| \\ 
&=& 1+2 \sum_{i=1}^{m/2} \binom{m}{i} (q-1)^i  - \binom{m}{\frac{m}{2}}. 
\end{eqnarray*}
The desired conclusion on the dimension then follows. 
\end{proof} 

We point out that the dimension of the code $\overline{\cGRM}{(q, m, m/2)}$ is equal to zero 
when $q=2$. Hence, the code is nontrivial only when $q>2$. 

\begin{example} 
When $(q,m)=(5,2)$, the code $\overline{\cGRM}{(q, m, m/2)}$ has parameters $[24,9,12]$.  
\end{example}

\begin{problem}\label{prob-four} 
Determine the minimum distance of the code $\overline{\cGRM}{(q, m, m/2)}$ of 
Theorem \ref{thm-bCode2}. 
\end{problem}

\begin{theorem}\label{thm-bCode3}
Let $m \geq 3$ be odd. 
Then the reversible cyclic code $\overline{\cGRM}{(q, m, (m+1)/2)}$ has minimum distance 
$$d \geq 2\frac{q^{(m+3)/2}-1}{q-1}$$ 
and dimension 
\begin{eqnarray}
q^m-2 \sum_{i=0}^{(m+1)/2} \binom{m}{i} (q-1)^i  + \frac{4+(q-2)(m+1)}{2}\binom{m}{\frac{m-1}{2}}. 
\end{eqnarray}
\end{theorem}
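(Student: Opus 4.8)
The plan is to mirror the proof of Theorem~\ref{thm-bCode2}, treating $h=(m+1)/2$ as the odd-$m$ analogue of the central value $m/2$. Since $m\geq 3$ is odd we have $1\leq h=(m+1)/2\leq m-1$, so Theorem~\ref{thm-gRMcode} applies and $\deg(g_{(q,m,h)}(x))=\sum_{i=1}^{(m+1)/2}\binom{m}{i}(q-1)^i$. As in the even case, $h$ exceeds $\lceil m/2\rceil-1=(m-1)/2$, so Lemma~\ref{lem-RMc} no longer guarantees that $g_{(q,m,h)}(x)$ and $g^*_{(q,m,h)}(x)$ are coprime; the generator polynomial is therefore
$$
g(x)=(x-1)\,\frac{g_{(q,m,h)}(x)\,g^*_{(q,m,h)}(x)}{\gcd\!\left(g_{(q,m,h)}(x),\,g^*_{(q,m,h)}(x)\right)},
$$
and $\deg(\gcd)=|I{(q,m,h)}\cap(-I{(q,m,h)})|$. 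Everything then reduces to evaluating this intersection size.

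For the minimum distance I would argue exactly as in Theorem~\ref{thm-bCode}: writing $L=(q^{h+1}-1)/(q-1)=(q^{(m+3)/2}-1)/(q-1)$, every integer in $\{1,\dots,L-1\}$ has $q$-weight at most $h$, hence lies in $I{(q,m,h)}$, and by reflection $\{n-(L-1),\dots,n-1\}\subseteq -I{(q,m,h)}$. Together with $0$ (contributed by the factor $x-1$), the defining set of $\overline{\cGRM}{(q,m,(m+1)/2)}$ contains the $2L-1$ cyclically consecutive elements $\{n-(L-1),\dots,n-1,0,1,\dots,L-1\}$, and the BCH bound yields $d\geq 2L=2\frac{q^{(m+3)/2}-1}{q-1}$.

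The core of the argument is the count $|I{(q,m,h)}\cap(-I{(q,m,h)})|$, which I expect to be the main obstacle. An element $a$ lies in this intersection precisely when $\wt(a)\leq h$ and $\wt(n-a)\leq h$. By Lemma~\ref{lem-april91}, $\wt(n-a)=m-e$, where $e=|\{j:a_j=q-1\}|$, so the two conditions read $\wt(a)\leq h=(m+1)/2$ and $e\geq m-h=(m-1)/2$; since always $e\leq\wt(a)$, setting $k=(m-1)/2$ forces $k\leq e\leq \wt(a)\leq k+1$. Thus only three digit-patterns survive, namely $(\wt(a),e)\in\{(k,k),(k+1,k),(k+1,k+1)\}$. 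Counting each pattern — equivalently, extracting from Lemma~\ref{lem-april92} the contributions to $-N(i)$ of weight at most $h$ for $i\in\{k,k+1\}$ — gives $\binom{m}{k}$, $\binom{m}{k+1}(k+1)(q-2)$, and $\binom{m}{k+1}$ respectively. Using $\binom{m}{k}=\binom{m}{k+1}=\binom{m}{(m-1)/2}$ together with $k+1=(m+1)/2$, these sum to $\frac{4+(q-2)(m+1)}{2}\binom{m}{(m-1)/2}$.

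Finally I would assemble the dimension as in Theorem~\ref{thm-bCode2}, from
$$
\deg(g(x))=2\deg(g_{(q,m,h)}(x))+1-|I{(q,m,h)}\cap(-I{(q,m,h)})|,
$$
subtracting from $n=q^m-1$ and absorbing the constant via $2\sum_{i=1}^{(m+1)/2}=2\sum_{i=0}^{(m+1)/2}-2$ to reach the stated dimension. The only genuinely delicate points are the intersection count itself: the disjointness of the sets $-N(i)$ over distinct $i$ (which legitimises summing the pattern counts without over-counting) and the coincidence $\binom{m}{k}=\binom{m}{k+1}$ special to odd $m$ are the steps to handle with care.
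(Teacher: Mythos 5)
Your proposal is correct and follows essentially the same route as the paper's own proof: the same factorisation $g(x)=(x-1)\,g_{(q,m,h)}(x)g^*_{(q,m,h)}(x)/\gcd(\cdot,\cdot)$, the same BCH-bound argument on the $2L-1$ cyclically consecutive zeros, and the same intersection count via Lemmas~\ref{lem-april91} and~\ref{lem-april92} --- your three patterns $(\wt(a),e)\in\{(k,k),(k+1,k),(k+1,k+1)\}$ are exactly the paper's cases C1, C3 and C2, with matching counts $\binom{m}{k}$, $\binom{m}{k+1}(k+1)(q-2)$ and $\binom{m}{k+1}$. The only difference is presentational: you derive the three cases systematically from the inequality $k\leq e\leq\wt(a)\leq k+1$, whereas the paper lists them directly.
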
 

\begin{proof}
The conclusion on the minimum distance comes from the BCH bound. We now prove the desired conclusion on 
the dimension of the code. It follows from Lemmas \ref{lem-RMc}, \ref{lem-april91} and \ref{lem-april92} that 
$a \in I{(q,m,(m+1)/2)} \cap (-I{(q,m,(m+1)/2)})$ if and only if one of the following three cases 
happens: 
\begin{enumerate}
\item[C1:] $\wt(a)=\frac{m-1}{2}$, $\wt(n-a)=\frac{m+1}{2}$, and the $q$-adic expression 
of $a$ is of the form 
$$ 
(q-1)(q^{i_1}+q^{i_2} + \cdots + q^{i_{(m-1)/2}}), 
$$
where $0 \leq i_1 < i_2 < \cdots < i_{(m-1)/2} \leq m-1$. The total number of such $a$'s is 
equal to $\binom{m}{(m-1)/2}$. 
\item[C2:] $\wt(a)=\frac{m+1}{2}$, $\wt(n-a)=\frac{m-1}{2}$, and the $q$-adic expression 
of $a$ is of the form 
$$ 
(q-1)(q^{i_1}+q^{i_2} + \cdots + q^{i_{(m+1)/2}}), 
$$
where $0 \leq i_1 < i_2 < \cdots < i_{(m+1)/2} \leq m-1$. The total number of such $a$'s is 
equal to $\binom{m}{(m+1)/2}$. 
\item[C3:] $\wt(a)=\frac{m+1}{2}$, $\wt(n-a)=\frac{m+1}{2}$, and the $q$-adic expression 
of $a$ is of the form 
$$ 
a_{i_1}q^{i_1}+a_{i_2}q^{i_2} + \cdots + a_{i_{(m+1)/2}}q^{i_{(m+1)/2}}, 
$$
where $0 \leq i_1 < i_2 < \cdots < i_{(m+1)/2} \leq m-1$, $1 \leq a_{i_{j}} \leq q-1$, 
and all the entries of the vector $(a_{i_1}, a_{i_2}, \cdots, a_{i_{(m+1)/2}})$ are $q-1$ 
except one that could be any element in $\{1,2, \cdots, q-2\}$. The total number of such $a$'s is 
equal to 
$$ 
\frac{(m+1)(q-2)}{2}\binom{m}{\frac{m+1}{2}}. 
$$
\end{enumerate} 

Summarizing the conclusions in the three cases above, we obtain that 
$$ 
|I{(q,m,(m+1)/2)} \cap (-I{(q,m,(m+1)/2)})|=\frac{4+(q-2)(m+1)}{2}\binom{m}{\frac{m-1}{2}}. 
$$

As before, let $g_{(q, m, (m+1)/2)}(x)$ be the polynomial of the code $\cGRM{(q,m, (m+1)/2)}$. 
Then the generator polynomial of $\overline{\cGRM}{(q, m, (m+1)/2)}$ is given by 
\begin{eqnarray*}
g(x) &=& (x-1)\lcm\left(g_{(q, m, (m+1)/2)}(x), \, g^*_{(q, m, (m+1)/2)}(x)\right) \\
&=& \frac{(x-1)g_{(q, m, (m+1)/2)}(x)g^*_{(q, m, (m+1)/2)}(x)}{\gcd\left(g_{(q, m, (m+1)/2)}(x), \,  g^*_{(q, m, (m+1)/2)}(x)\right)}.  
\end{eqnarray*}
Therefore, 
\begin{eqnarray*}
\deg(g(x)) 
&=& 2\deg(g_{(q, m, (m+1)/2)}(x))+1-\deg(\gcd(g_{(q, m, (m+1)/2)}(x), \, g^*_{(q, m, (m+1)/2)}(x)) \\
&=& 2\deg(g_{(q, m, (m+1)/2)}(x))+1- |I{(q,m,(m+1)/2)} \cap (-I{(q,m,(m+1)/2)})| \\ 
&=& 1+2 \sum_{i=1}^{(m+1)/2} \binom{m}{i} (q-1)^i  - \frac{4+(q-2)(m+1)}{2}\binom{m}{\frac{m-1}{2}}. 
\end{eqnarray*}
The desired conclusion on the dimension then follows. 
\end{proof} 

We point out that the dimension of $\overline{\cGRM}{(q, m, (m+1)/2)}$ is equal to zero 
when $q=2$. Hence, the code is nontrivial only when $q>2$. 

\begin{example} 
When $(q,m)=(4,3)$, $\overline{\cGRM}{(q, m, (m+1)/2)}$ has parameters $[63,8,42]$.  
\end{example} 

\begin{problem}\label{prob-five} 
Determine the minimum distance of the code $\overline{\cGRM}{(q, m, (m+1)/2)}$ of 
Theorem \ref{thm-bCode3}. 
\end{problem} 

\section{Summary and concluding remarks} 

The first contribution of this paper is the new generalisation of the classical punctured binary Reed-Muller codes. The newly generalised codes $\cGRM{(q, m, h)}$ are documented in Theorem \ref{thm-gRMcode}. A lower bound and a upper bound on the minimum distance of $\cGRM{(q, m, h)}$ were developed and given in Theorem \ref{thm-gRMcode}. Experimental data indicates that this lower bound is indeed the minimum distance. However, we were not able to prove this conjecture. It would be nice if this open problem can be settled. The dual code $\cGRM{(q, m, h)}^\perp$ was also studied. But the 
minimum distance $d^\perp$ of $\cGRM{(q, m, h)}^\perp$ is also open, though a lower bound on $d^\perp$ was given in Theorem \ref{thm-gRMcodedual}. The locality of both $\cGRM{(q, m, h)}$ and 
$\cGRM{(q, m, h)}^\perp$ depends on $d^\perp$ and $d$ respectively. Hence, it is also valuable to settle Open Problem \ref{prob-two}.     

The second contribution of this paper is to prove that the extended code $\widehat{\cGRM}(q,m,h)$ is 
affine-invariant and holds $2$-designs. 

The third contribution of this paper is the construction of the reversible cyclic codes 
$\overline{\cGRM}{(q, m, h)}$, which are based on the cyclic codes 
$\cGRM{(q, m, h)}$.  
The dimension of $\overline{\cGRM}{(q, m, h)}$ was settled for all $h$ with $1 \leq h \leq 
\lceil m/2 \rceil$. A lower bound on the minimum distance of the reversible cyclic code $\overline{\cGRM}{(q, m, h)}$ was developed. But the minimum distance of $\overline{\cGRM}{(q, m, h)}$ is unknown. It would be nice 
if Open Problems \ref{prob-three}, \ref{prob-four} and \ref{prob-five} can be resolved.

\end{document}